\documentclass[11pt]{article}
\usepackage[margin=1in]{geometry}
\usepackage[final]{microtype}
\usepackage{epsfig}
\usepackage{graphics}
\usepackage{latexsym}
\usepackage{amsmath}
\usepackage{amsfonts}
\usepackage{amssymb}
\usepackage{mathrsfs}
\usepackage[dvipsnames]{xcolor}
\usepackage{amsthm}
\usepackage{xspace}
\usepackage{epstopdf}
\usepackage{float}
\usepackage{hyperref}
\usepackage{pgfplots}
\pgfplotsset{compat=1.18}
\usepackage{longtable}

\numberwithin{equation}{section}
\usepackage[ruled,vlined]{algorithm2e}
\SetArgSty{textrm}

\usepackage[english]{babel}
\usepackage[nottoc]{tocbibind}

\usepackage{caption}
\usepackage{subcaption}
\usepackage{pifont}
\usepackage{booktabs}
\usepackage{multirow}

\theoremstyle{plain}
\newtheorem{theorem}{Theorem}

\newtheorem{lemma}{Lemma}

\newtheorem{mechanism}{Mechanism}
\theoremstyle{definition}

\theoremstyle{remark}

\makeatletter
\@addtoreset{equation}{section}
\def\section{\@startsection {section}{1}{\z@}{-3.5ex plus -1ex minus
 -.2ex}{2.3ex plus .2ex}{\large\bf}}
\makeatother

\def\bfm#1{\mbox{\boldmath$#1$}}

\def\0{\bfm 0}

\DeclareMathAlphabet{\mathpzc}{OT1}{pzc}{m}{it}

\newcounter{my}

\newcounter{my2}

\newcounter{my3}

\newcounter{my4}

\newcounter{my5}

\newcounter{my6}

\allowdisplaybreaks

\begin{document}

\title{Improved Randomized Approximations for Strategic Obnoxious Facility Location}
\author{Hau Chan$^{1}$\quad Jianan Lin$^{2}$\quad Chenhao Wang $^{3,4}$\\[0.75em]
$1$ University of Nebraska-Lincoln\\
$2$ Rensselaer Polytechnic Institute\\
$3$ Beijing Normal University-Zhuhai\\
$4$ Beijing Normal-Hong Kong Baptist University
}
\date{}
\maketitle

\begin{abstract}
    We study randomized strategyproof mechanisms for strategic obnoxious facility location on a line segment, where agents wish the facility to be located as far away from them as possible and their utility is their distance from the facility, under the social utility and minimum utility objectives. For social utility, we propose a novel randomized mechanism that breaks the previously best known \(\frac32\)-approximation of [Cheng, Yu, and Zhang, TCS 2013], achieving an approximation ratio of at most \(1.47359\). We also raise the lower bound on the approximation ratio of randomized strategyproof mechanisms from \(\frac{2}{\sqrt{3}}\approx1.15470\) [Feigenbaum et al., JAAMAS 2020] to \(\frac{105}{88}\approx1.19318\). For minimum utility, following the profile-independent approach of [Chan, Lin and Wang, AAMAS 2026], we design a simple randomized mechanism that reduces the approximation guarantee from \(\sqrt{2n}+O(1)\) to \(\sqrt n+O(1)\), where \(n\) is the number of agents. Finally, we prove that no randomized strategyproof mechanism can achieve an asymptotic approximation ratio strictly smaller than \(2\), strengthening the previous asymptotic lower bound of \(\frac32\) [Feigenbaum et al., JAAMAS 2020].
    Thus, all four bounds considered in this paper strictly improve upon the corresponding previously known results.
\end{abstract}


\section{Introduction}\label{sec:intro}

Strategic facility location is a classical problem that has been actively studied from the mechanism design perspective over the last few decades, with connections to operations research~\cite{alon2010strategyproof,feigenbaum2017approximately}, computer science~\cite{procaccia2013approximate,gravin2025approximation}, and economics~\cite{aziz2020capacity,d1979hotelling}. The standard version of the problem seeks to determine facility locations (e.g., parks, libraries, and schools) in a metric space based on the reported locations of a set of strategic agents, who prefer facilities closer to their ideal locations and may misreport their locations to manipulate the outcome. To address the problem from the mechanism design perspective, existing studies focus on designing strategyproof mechanisms that incentivize agents to report their locations truthfully, while determining facility locations that approximately optimize objectives measuring the cost/utility between agents' ideal locations and the chosen facility locations. These studies include mechanisms for locating one or more facilities in different metric spaces and under various cost/utility objectives \cite{lu10mechanism,DBLP:conf/sagt/Meir19,walsh24utility}.

While the majority of strategic facility location focuses on locating
pleasant facilities that agents prefer to be closer to, a parallel line of
studies considers strategic obnoxious facility location, in which agents
prefer an undesirable facility (e.g., a nuclear plant, a landfill, or a
prison) to be far away from their own locations.
Cheng et al.~\cite{cheng2013strategy}, whose preliminary work appeared
at COCOA 2011, initiated the study of approximation mechanisms for
strategic obnoxious facility location on a path. They examined deterministic
and randomized strategyproof mechanisms for maximizing social utility,
i.e., the sum of agents' distances from the facility.
For deterministic mechanisms, they showed that a simple voting mechanism
between the two endpoints of the segment gives a \(3\)-approximation.
For randomized mechanisms, Cheng et al.~\cite{cheng2013strategy} proposed
an endpoint randomized mechanism and proved that it achieves a
\(\frac32\)-approximation for social utility.

Subsequently, Ibara and Nagamochi~\cite{Ibara2012characterize} studied
strategic obnoxious facility location on a bounded line segment,
which can be normalized to the unit interval \([0,1]\), and characterized
deterministic strategyproof mechanisms independently of any objective.
They showed that every such mechanism has at most two possible candidate
locations across all profiles; that is, its range has size at most two.
Moreover, under anonymity, such a mechanism can be described as a
two-candidate threshold mechanism, where the facility is selected from two
candidate locations according to a threshold rule. In particular,
fixed-location mechanisms appear as a special case.
Together with the characterization of Ibara and
Nagamochi~\cite{Ibara2012characterize}, the results of Cheng et al.
imply that no deterministic strategyproof mechanism can achieve an
approximation ratio better than \(3\), so the deterministic bound is tight.
Ye et al.~\cite{ye2015strategy} studied both the social utility objective
and the objective of sum of squared utilities.
For the former, they proved a lower bound of \(1.077\) on the approximation
ratio of any randomized strategyproof mechanism.

Feigenbaum et al.~\cite{feigenbaum2020strategic} considered hybrid strategic facility location with both agents who prefer to be close to the facility and agents who prefer to be far away from the facility under the social utility and minimum utility objectives. 
In strategic obnoxious facility location (i.e., their hybrid strategic facility location with one type of agents), their results improved the randomized lower bound to
\(\frac{2}{\sqrt{3}}\approx1.15470\) under the social utility objective. 
Under the minimum utility objective, they showed that no deterministic strategyproof mechanism can achieve a bounded approximation ratio by combining the objective with the characterization of \cite{Ibara2012characterize} and established a lower bound of \(\frac32\) for randomized strategyproof mechanisms. 
More recently, Chan et al. \cite{chan2025obnoxious} studied the \(L_p\) utility/cost objectives for strategic obnoxious facility location. 
For minimum utility (i.e., $L_p$ utility with $p=-\infty$), they analyzed the uniform mechanism and obtained an upper bound of \(\sqrt{2n}+O(1)\) where $n$ denotes the number of agents.
Motivated by these gaps, we revisit randomized strategyproof mechanisms for obnoxious facility location on a segment under the social utility and minimum utility objectives. 


\subsection{Our Results}

We focus on randomized strategyproof mechanisms for the obnoxious facility
location on the unit segment \([0,1]\). Agents are distance
maximizers, and we study two utility-maximization objectives: social
utility, which is the sum of agents' distances to the facility, and
minimum utility, which is the minimum distance from any agent to the
facility. Our goal is to improve the best known approximation ratios
and lower bounds for randomized strategyproof mechanisms under these two
objectives.

Our results are summarized in Table~\ref{tab:our-results}. For social
utility, we design a universally strategyproof randomized mechanism that breaks the long-standing
\(\frac32\)-approximation of
\cite{cheng2013strategy}, achieving an approximation ratio of at most
\(1.47359\).
We also raise the randomized lower bound from
\(\frac{2}{\sqrt{3}}\approx1.15470\)~\cite{feigenbaum2020strategic} to
\(\frac{105}{88}\approx1.19318\). For minimum utility, we refine the
uniform mechanism analyzed in \cite{chan2025obnoxious}, which selects a facility uniformly at random from
the whole segment, and we obtain an
upper bound of \(\sqrt n+O(1)\), reducing the leading coefficient from \(\sqrt2\) to \(1\). This mechanism is also universally strategyproof. Finally, we strengthen the lower bound for
randomized strategyproof mechanisms from
\(\frac32\)~\cite{feigenbaum2020strategic} to an asymptotic lower bound of \(2\). Thus, all four bounds considered in this paper strictly improve
the corresponding previously known results.

\begin{table}[t]
    \centering
    \caption{Summary of our improved bounds for randomized strategyproof
    mechanisms on a line segment. Upper bounds are approximation ratios
    achieved by mechanisms, while lower bounds are impossibility results.}
    \label{tab:our-results}
    \renewcommand{\arraystretch}{1.15}
    \begin{tabular}{cccc}
        \toprule
        Objective & Bound & Previous best & This paper \\
        \midrule
        Social utility
        & Upper bound
        & \(\frac32\)~\cite{cheng2013strategy}
        & \(1.47359\) \\
        Social utility
        & Lower bound
        & \(\frac{2}{\sqrt{3}}\approx1.15470\)~\cite{feigenbaum2020strategic}
        & \(\frac{105}{88}\approx1.19318\) \\
        Minimum utility
        & Upper bound
        & \(\sqrt{2n}+O(1)\)~\cite{chan2025obnoxious}
        & \(\sqrt n+O(1)\) \\
        Minimum utility
        & Lower bound
        & \(\frac32\) asymptotically~\cite{feigenbaum2020strategic}
        & \(2\) asymptotically \\
        \bottomrule
    \end{tabular}
\end{table}

\subsection{Additional Related Work}

We have focused so far on the works most directly related to our results
on obnoxious facility location on a segment. Beyond the line-segment
setting, Oomine and Nagamochi~\cite{oomine2016characterizing} studied
strategyproof mechanisms for locating an obnoxious facility on tree
networks. Recent variants of obnoxious facility location also incorporate
fairness requirements~\cite{alex2024}, group-fairness constraints
\cite{li2024strategyproof}, and predictions into the mechanism-design
model~\cite{DBLP:journals/corr/abs-2212-09521}. These works illustrate
that, although the obnoxious facility location problem is structurally
simple, the interaction between strategic behavior and different
performance or fairness objectives remains subtle.

Our work is also related to the broader literature on strategyproof
facility location, where agents prefer the facility to be close rather
than far away. Moulin~\cite{moulin1980strategy} gave the classical
characterization of strategyproof mechanisms on the line via generalized
median mechanisms. Procaccia and Tennenholtz~\cite{procaccia2013approximate}
initiated the study of approximation ratios for strategyproof facility
location mechanisms without money.
Other extensions include two-facility location~\cite{lu2009tighter,lu10mechanism,chan2026randomized}, network settings (e.g., trees and circles)~\cite{alon2010strategyproof,DBLP:conf/sagt/Meir19,rogowski2025improved}, and multi-dimensional spaces~\cite{tang2020characterization,lin2020nearly,GoelH23,gravin2025approximation,barak2026facility,chan2026strategyproof}.

\section{Preliminaries}\label{sec:model}

We study the obnoxious facility location problem on a unit-length line segment. 
The feasible region is the interval $[0,1]$. Let $N=[n]=\{1,2,\ldots,n\}$ be the set of agents. Each agent $i\in N$ has a private location $x_i\in[0,1]$. A location profile is denoted by $\mathbf{x}=(x_1,\ldots,x_n)\in[0,1]^n$, and we write $\mathbf{x}_{-i}$ for the profile obtained by removing agent $i$.

A facility location is a point $y\in[0,1]$. Since the facility is obnoxious, agents prefer the facility to be as far away from them as possible. Thus, for an agent located at $x_i$ and a facility located at $y$, the utility of agent $i$ is \(u(x_i,y)=|x_i-y|\).

Throughout the paper, we focus on randomized mechanisms. A randomized mechanism $f$ maps every reported profile $\mathbf{x}\in[0,1]^n$ to a probability distribution over $[0,1]$. We use $Y\sim f(\mathbf{x})$ to denote the random facility location sampled from this distribution. For a randomized mechanism, the utility of agent $i$ is its expected distance from the random facility location: \(u(x_i,f(\mathbf{x}))
    =
    \mathbb{E}_{Y\sim f(\mathbf{x})}\bigl[|x_i-Y|\bigr]\).

A mechanism $f$ is \emph{strategyproof} (SP) if no agent can increase her expected utility by misreporting her location. Formally, for every agent $i\in N$, every profile $\mathbf{x}\in[0,1]^n$, and every possible misreport $x_i'\in[0,1]$,
\[
    \mathbb{E}_{Y\sim f(x_i,\mathbf{x}_{-i})}\bigl[|x_i-Y|\bigr]
    \ge
    \mathbb{E}_{Y\sim f(x_i',\mathbf{x}_{-i})}\bigl[|x_i-Y|\bigr].
\]
While this definition of strategyproofness is based on expectation, there is a stronger notion: a randomized mechanism is called \emph{universally strategyproof} if it is a distribution over deterministic strategyproof mechanisms.

We consider two utility-maximization objectives. For a deterministic facility location $y\in[0,1]$, we define the \emph{social utility} and \emph{minimum utility} as
\[
    \operatorname{SU}(\mathbf{x},y)
    =
    \sum_{i\in N}|x_i-y|,
    \qquad\text{and}\qquad
    \operatorname{MU}(\mathbf{x},y)
    =
    \min_{i\in N}|x_i-y|.
\]

For a randomized mechanism $f$, the objective value is defined as the expected objective value with respect to the random facility location $Y\sim f(\mathbf{x})$. That is,
\[
    \operatorname{SU}(\mathbf{x},f(\mathbf{x}))
    =
    \mathbb{E}_{Y\sim f(\mathbf{x})}
    \left[
        \sum_{i\in N}|x_i-Y|
    \right],
\qquad\text{and}\qquad
    \operatorname{MU}(\mathbf{x},f(\mathbf{x}))
    =
    \mathbb{E}_{Y\sim f(\mathbf{x})}
    \left[
        \min_{i\in N}|x_i-Y|
    \right].
\]

For $T\in\{\operatorname{SU},\operatorname{MU}\}$, let \(\operatorname{OPT}_T(\mathbf{x})
    =
    \max_{y\in[0,1]} T(\mathbf{x},y)\)
be the optimal value of objective $T$ on profile $\mathbf{x}$. A randomized mechanism $f$ is an $\alpha$-approximation for objective $T\in\{\operatorname{SU},\operatorname{MU}\}$ if for every profile $\mathbf{x}\in[0,1]^n$,
\[
    \operatorname{OPT}_T(\mathbf{x})
    \le
    \alpha\cdot T(\mathbf{x},f(\mathbf{x})).
\]
By definition, an approximation ratio is at least $1$.

\section{Social Utility}

Throughout this section, write \(\operatorname{OPT}=\operatorname{OPT}_{\operatorname{SU}}\).

In this section, we study social utility for both upper and lower bounds.

\subsection{Upper Bound}

Before defining our mechanism,
we first recall the classical strategyproof randomized mechanism of Cheng et
al.~\cite{cheng2013strategy}, which is the known \(\frac32\)-approximation
mechanism for the obnoxious facility location problem on an interval.

\begin{mechanism}\label{mec:cheng}
(Endpoint Randomized Mechanism~\cite{cheng2013strategy}).
Given a reported profile \(\hat{\mathbf{x}}\), let
\(N_L=\{i\in N:\hat{x}_i\le \frac12\}\),
\(N_R=\{i\in N:\hat{x}_i>\frac12\}\). Denote
\(\ell=|N_L|\) and \(r=|N_R|\). The mechanism outputs \(0\) with
probability \(\frac{r^2+2\ell r}{\ell^2+r^2+4\ell r}\), 
and outputs \(1\) with the remaining probability \(\frac{\ell^2+2\ell r}{\ell^2+r^2+4\ell r}\).
\end{mechanism}

The mechanism above only places the facility at one of the two endpoints.
In contrast, our mechanism uses a random threshold and may also place the
facility at interior points. We denote our mechanism by
\(\mathcal{M}_\lambda\), where \(\lambda\in[0,1]\) is a parameter. In the
main analysis, we will set \(\lambda=\frac{93}{100}\).

\begin{mechanism}\label{mec:ran}
(Random Threshold Dictator \(\mathcal{M}_\lambda\)).
Given a reported profile \(\hat{\mathbf{x}}\), draw a threshold \(T\) as
follows: with probability \(\lambda\), draw \(T\) uniformly from
\([0,1]\); with probability \(1-\lambda\), set \(T=\frac12\).
Independently, draw an agent \(J\) uniformly at random from \(N\). Define
\(a_T=\max\{0,2T-1\}\) and \(b_T=\min\{1,2T\}\). The mechanism outputs
\(b_T\) if \(\hat{x}_J\le T\), and outputs \(a_T\) otherwise.
\end{mechanism}

For every threshold \(T\), the two candidate locations \(a_T\) and \(b_T\)
are symmetric around \(T\), subject to staying inside the interval
\([0,1]\). More precisely, they are the endpoints of the largest
subinterval of \([0,1]\) whose midpoint is \(T\). Therefore
\(a_T\le T\le b_T\) and \(a_T+b_T=2T\). After \(T\) and \(J\) are chosen,
the selected agent decides which of these two candidates is used: if the
selected agent reports weakly to the left of \(T\), the mechanism chooses
the right candidate \(b_T\); otherwise, it chooses the left candidate
\(a_T\).

\begin{theorem}\label{thm:su-upper}
The mechanism \(\mathcal{M}_{93/100}\) is universally strategyproof and
achieves an approximation ratio for social utility of at most
\[
    \alpha_0
    =
    \left(
        2\sqrt{
            \frac{7641}{10000}\cdot\frac{6869}{5000}
        }
        -
        \frac{2741}{2000}
    \right)^{-1}
    <
    1.47359.
\]
\end{theorem}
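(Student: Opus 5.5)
The proof has two parts: universal strategyproofness, which is immediate, and the approximation bound, which reduces to a pairwise inequality between agents. For universal strategyproofness, fix the realized threshold \(T\) and dictator \(J\). The resulting deterministic rule depends only on \(\hat x_J\), so no agent other than \(J\) can affect the outcome. For agent \(J\) at true location \(x\), the identity \(a_T+b_T=2T\) gives
\[
(b_T-x)-(x-a_T)=2(T-x).
\]
Hence \(|x-b_T|\ge|x-a_T|\) exactly when \(x\le T\), which is precisely the truthful outcome. Each deterministic rule in the support is therefore strategyproof, and \(\mathcal{M}_\lambda\) is a mixture of them.

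For the ratio, first note that \(\operatorname{SU}(\mathbf{x},y)\) is convex in \(y\). So
\[
\operatorname{OPT}=\max\Bigl\{\sum_i x_i,\ \sum_i(1-x_i)\Bigr\},
\]
and by the reflection \(x\mapsto 1-x\), under which \(\mathcal{M}_\lambda\) is symmetric up to ties on a null set of \(T\), we may assume \(\operatorname{OPT}=\sum_i x_i=:S\). Next, write the mechanism's value as a pairwise sum. Let \(c_T(y)=b_T\) if \(y\le T\) and \(c_T(y)=a_T\) otherwise, and define
\[
g(x,y)=\lambda\int_0^1|x-c_T(y)|\,dT+(1-\lambda)\bigl|x-c_{1/2}(y)\bigr|.
\]
Then
\[
\operatorname{SU}(\mathbf{x},\mathcal{M}_\lambda(\mathbf{x}))=\frac1n\sum_{i,j}g(x_i,x_j).
\]
I would compute \(g\) explicitly as a piecewise polynomial of degree at most \(2\) in \((x,y)\), splitting according to \(T\lessgtr\tfrac12\), \(T\lessgtr y\), and the position of \(x\) relative to \(a_T,b_T\). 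The symmetrized kernel \(h(x,y)=g(x,y)+g(y,x)\) is the object the rest of the argument works with.

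The goal is then to show
\[
\frac1n\sum_{i,j}g(x_i,x_j)\ \ge\ \frac{S}{\alpha_0}.
\]
The shape of \(\alpha_0^{-1}=2\sqrt{AB}-C\), with \(A=\tfrac{7641}{10000}\), \(B=\tfrac{6869}{5000}\), \(C=\tfrac{2741}{2000}\), strongly suggests an AM--GM step: minimize \(A\theta+B/\theta-C\) over a free parameter \(\theta>0\). My plan is to prove a family of pairwise inequalities of the form
\[
h(x,y)\ \ge\ \beta_\theta\,(x+y)+\gamma_\theta\bigl(\phi(x)+\phi(y)\bigr)-\delta_\theta\,\phi(x)\phi(y)
\]
for suitable constants and a simple weight \(\phi\) (e.g.\ \(\phi(x)=\mathbf 1[x>\tfrac12]\) or \(\phi(x)=x\)). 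Summing over all pairs turns the mechanism's value into an expression in \(S\) and in the fraction \(k/n\) of agents to the right of \(\tfrac12\), with the mean in place of \(S\). In other words, I would reduce to a two-parameter problem in (fraction of right agents, mean location).

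A dominance argument should further push the worst case to profiles with agents only at \(\tfrac12\) and at \(1\), or at \(0\) and \(1\). The argument is that moving an agent within \([0,\tfrac12]\) or within \((\tfrac12,1]\) changes \(S\) linearly while the value changes in a controlled direction, and piecewise quadratic functions on each cell are minimized at vertices or stationary points. The resulting one-variable ratio in the right-agent fraction \(t\) is minimized in closed form, and with \(\lambda=\tfrac{93}{100}\) its minimum is exactly \((2\sqrt{AB}-C)^{-1}\).

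I expect the main obstacle to be this reduction from arbitrary profiles to the extremal two-location profiles. The quadratic form \(\sum g(x_i,x_j)\) is not obviously concave in each \(x_i\), because of the kinks of \(g\) at \(T=x\), \(T=y\) and \(T=\tfrac12\). My first attempt would be the pairwise linearization above: verify the inequality for \(h\) as a finite case check over the polynomial pieces of the unit square, and then optimize \(\theta\) at the end. If that fails, I would fall back to a direct local perturbation argument showing that the worst case has at most two distinct interior locations.
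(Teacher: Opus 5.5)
Your strategyproofness argument is correct and is essentially the paper's. The gap is in the approximation part. You never state the one inequality that carries the proof, and what you propose in its place does not deliver $\alpha_0$.

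The paper proves a single pointwise minorant of the symmetrized kernel: $K^s_{93/100}(x,z)\ge A+B(x+z)+Cxz$ with $A=\frac{7641}{10000}$, $B=-\frac{2741}{4000}$, $C=\frac{6869}{5000}$. This is checked as a finite list of quadratics on the polygonal pieces of $0\le x\le z\le1$. The two agents in the kernel representation are i.i.d.\ draws from the profile, so the right-hand side has expectation $A+2B\bar x+C\bar x^2$. Hence for $\bar x\ge\frac12$ the ratio is at least $A/\bar x+2B+C\bar x$. AM--GM over the \emph{adversary's mean} $\bar x$ (minimizer $\sqrt{A/C}\approx0.746\in[\frac12,1]$) then gives exactly $2\sqrt{AC}+2B=1/\alpha_0$.

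So the square root comes from minimizing over the mean location, not from tuning a certificate parameter $\theta$ (which you would maximize, not minimize). No second parameter and no extremal profile is needed. The constant term $A$ is what makes this work, and your template has none. With $\phi(x)=x$ it reads $K^s_{93/100}\ge B'(x+z)+C'xz$. Validity at $(0,1)$, where $K^s_{93/100}=0.2325$, and at $(1,1)$, where $K^s_{93/100}=0.7675$, forces $B'\le0.2325$ and $2B'+C'\le0.7675$. Therefore $\min_{\bar x\in[1/2,1]}(2B'+C'\bar x)\le B'+\frac{2B'+C'}{2}\le0.61625$, i.e.\ a guarantee no better than about $1.62$, worse than $\frac32$.

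Your indicator variant leads you to the extremal-profile reduction, and that reduction is false, not merely unproven. Take $65\%$ of the agents at $1$ and $35\%$ at $\frac25$ (e.g.\ $13$ and $7$ agents). Here $K^s_{93/100}(1,1)=0.7675$, $K^s_{93/100}(\frac25,\frac25)=0.5163$ and $K^s_{93/100}(\frac25,1)=0.3581$. The expected social utility per agent is therefore about $0.5505$, while $\mathrm{OPT}/n=\bar x=0.79$, giving ratio about $1.435$. By contrast, every profile supported on $\{\frac12,1\}$ has ratio at most about $1.4286$, and every profile supported on $\{0,1\}$ at most about $1.347$. Pushing agents in $[0,\frac12]$ to an endpoint can thus improve the mechanism's ratio, and the closed-form minimization you describe would certify a false bound.

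For the same reason, your claim that the extremal computation returns exactly $\alpha_0$ cannot be right. $\alpha_0$ is the value of a certificate with strictly positive slack everywhere (smallest about $3.5\cdot10^{-5}$), not the mechanism's true worst case.

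Finally, the reflection is not symmetric ``up to a null set of $T$'': $T=\frac12$ carries mass $\frac{7}{100}$, so a dictator at exactly $\frac12$ breaks the symmetry with positive probability. The paper handles this by coupling $T\mapsto1-T$ and checking that on this tie event both runs output $1$, which favors the profile with $\bar x\le\frac12$.
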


We first show its strategyproofness.

\begin{lemma}\label{lem:sp}
For every \(\lambda\in[0,1]\), the mechanism \(\mathcal{M}_\lambda\) is
strategyproof. In fact, it is universally strategyproof.
\end{lemma}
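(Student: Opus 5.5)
The plan is to prove universal strategyproofness directly, which implies strategyproofness in expectation. Fix the random choices \(T=t\) and \(J=j\). These are drawn independently of the reported profile, so \(\mathcal{M}_\lambda\) is a distribution over the deterministic mechanisms \(f_{t,j}\), where \(f_{t,j}(\hat{\mathbf{x}})=b_t\) if \(\hat{x}_j\le t\) and \(f_{t,j}(\hat{\mathbf{x}})=a_t\) otherwise. It therefore suffices to show that every \(f_{t,j}\) is strategyproof. Strategyproofness in expectation then follows by averaging: for each fixed agent \(i\), true location \(x_i\), profile \(\mathbf{x}_{-i}\) and misreport \(x_i'\), the inequality \(|x_i-f_{t,j}(x_i,\mathbf{x}_{-i})|\ge |x_i-f_{t,j}(x_i',\mathbf{x}_{-i})|\) holds pointwise in \((t,j)\). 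Taking expectations over the profile-independent distribution of \((T,J)\) gives the required inequality.

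For a fixed \(f_{t,j}\), I would split into two cases. \textbf{Case \(i\neq j\).} Agent \(i\)'s report does not enter the rule, so the output is unchanged by any misreport. \textbf{Case \(i=j\).} The agent can only choose between \(a_t\) and \(b_t\). It remains to check that truthful reporting selects the farther candidate.

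If \(x_i\le t\), truth yields \(b_t\), and \(|x_i-b_t|=b_t-x_i\ge b_t-t\). For the other candidate, \(|x_i-a_t|\) is at most \(t-a_t\) when \(x_i\ge a_t\), and at most \(a_t\le t-a_t\) when \(x_i<a_t\). The latter inequality holds because \(a_t\le t/2\): either \(a_t=0\), or \(a_t=2t-1\le t/2\) since \(t\le 1\). Using \(b_t-t=t-a_t\), which follows from \(a_t+b_t=2t\), we get \(|x_i-b_t|\ge|x_i-a_t|\). The case \(x_i>t\) is symmetric, using \(1-b_t\le b_t-t\). Equivalently, the midpoint of \(a_t\) and \(b_t\) is \(t\), so any point \(\le t\) is weakly closer to \(a_t\) than to \(b_t\), and any point \(>t\) is strictly closer to \(b_t\).

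The only step needing any care is this midpoint argument, since the symmetric-candidate structure makes it a one-line check. The tie at \(x_i=t\) is harmless because the two candidates are then equidistant.
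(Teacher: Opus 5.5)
Your overall structure matches the paper's: fix the profile-independent realization \((T,J)\), note that a non-selected agent cannot affect the outcome, check that the selected agent's truthful report picks the weakly farther of the two candidates, and average. The detailed case analysis for \(i=j\), however, contains a false step. The claim \(a_t=2t-1\le t/2\) ``since \(t\le1\)'' is wrong, because \(2t-1\le t/2\) holds only for \(t\le 2/3\). So in the subcase \(x_i<a_t\), your intermediate bound \(|x_i-a_t|\le t-a_t\) fails for \(t\in(2/3,1]\). For example, \(t=9/10\) and \(x_i=0\) give \(a_t=4/5\) and \(|x_i-a_t|=4/5>1/10=t-a_t\). The mirrored claim \(1-b_t\le b_t-t\), used for \(x_i>t\), fails in the same way for \(t<1/3\), e.g.\ \(t=1/10\) and \(x_i=1\). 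Hence the chain \(|x_i-a_t|\le t-a_t=b_t-t\le|x_i-b_t|\) is not valid in general. It routes through the quantity \(t-a_t\), which is too small to bound the distance to a candidate lying on the agent's far side.

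The conclusion is still true, and the repair is immediate. In the subcase \(x_i<a_t\), compare directly: \(|x_i-a_t|=a_t-x_i\le b_t-x_i=|x_i-b_t|\), because \(a_t\le b_t\). The subcase \(x_i>b_t\) is handled symmetrically. Better still, your closing midpoint observation is correct and suffices on its own. Since \(t=(a_t+b_t)/2\), a point \(x\le t\) is weakly closer to \(a_t\) than to \(b_t\), and a point \(x\ge t\) is weakly closer to \(b_t\). This is exactly what the paper verifies via the identity \(|x-b_T|^2-|x-a_T|^2=2(b_T-a_T)(T-x)\). You should present the midpoint argument as the proof, rather than as an ``equivalent'' restatement of a computation that is itself incorrect.
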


\begin{proof}
Fix an arbitrary realization of the random choices, namely a threshold
\(T\) and a selected agent \(J\). 
Consider first an agent \(i\ne J\). The report of agent \(i\) does not
affect the outcome, and hence agent \(i\) cannot benefit by misreporting.
It remains to consider the selected agent \(J\).
Let the true location of
this agent be \(x_J\). The only two possible outcomes are \(a_T\) and
\(b_T\), whose midpoint is \(T\).

If \(x_J\le T\), then \(b_T\) is weakly farther from \(x_J\) than \(a_T\).
Indeed,
\[
    |x_J-b_T|^2-|x_J-a_T|^2
    =
    (b_T-a_T)(a_T+b_T-2x_J)
    =
    2(b_T-a_T)(T-x_J)
    \ge 0.
\]
Thus truthful reporting makes the mechanism choose \(b_T\), which is a
weakly preferred outcome for agent \(J\). Any report on the same side of
\(T\) leaves the outcome unchanged, while any report on the other side
changes the outcome to \(a_T\), which is no better.

The case \(x_J>T\) is symmetric. In this case \(a_T\) is weakly farther
from \(x_J\) than \(b_T\), since
\[
    |x_J-a_T|^2-|x_J-b_T|^2
    =
    2(b_T-a_T)(x_J-T)
    \ge 0.
\]
Therefore truthful reporting again selects a weakly preferred outcome, and
no misreport can improve the selected agent's utility.

Hence, for every fixed realization \((T,J)\), truthful reporting is a
dominant strategy for every agent. Since the distribution of \((T,J)\) is
independent of the reported profile, \(\mathcal{M}_\lambda\) is
strategyproof. Moreover, because every deterministic mechanism obtained by
fixing the random choices is strategyproof, \(\mathcal{M}_\lambda\) is
universally strategyproof.
\end{proof}

We now turn to the approximation ratio.
The main difficulty is to
lower bound the expected social utility of \(\mathcal{M}_\lambda\) for
every profile, even though the mechanism may output both endpoints and
interior points. Our proof uses the random-dictator structure of the
mechanism to rewrite its expected social utility as a two-variable kernel
expression. We then prove a pointwise quadratic lower bound on this kernel,
which reduces the approximation analysis to a one-dimensional optimization
over the average location of the agents.

We begin with the kernel representation of the expected social utility. For
\(x,z\in[0,1]\), define
\begin{align}
    K(x,z)
    &=
    \int_0^z |x-a_T|\,dT
    +
    \int_z^1 |x-b_T|\,dT,
    \label{eq:uniform-kernel}\\
    H(x,z)
    &=
    \begin{cases}
        1-x, & z\le \frac12,\\
        x, & z>\frac12.
    \end{cases}
    \label{eq:atom-kernel}
\end{align}
Here \(K(x,z)\) is the expected utility of an agent at \(x\) when the
selected agent is at \(z\) and the threshold is drawn uniformly from
\([0,1]\). Similarly, \(H(x,z)\) is the utility contribution from the
case where the threshold is fixed at \(\frac12\). We will explain this in the proof of Lemma~\ref{lem:kernel-representation}. For
\(\lambda\in[0,1]\), let
\begin{align}
    K_\lambda(x,z)
    &=
    \lambda K(x,z)+(1-\lambda)H(x,z),
    \label{eq:mixture-kernel}\\
    K_\lambda^s(x,z)
    &=
    \frac{K_\lambda(x,z)+K_\lambda(z,x)}{2}.
    \label{eq:symmetric-kernel}
\end{align}

\begin{lemma}\label{lem:kernel-representation}
Let \(\mathbf{x}\in[0,1]^n\) be any profile, and let \(X\) and \(Z\) be
two independent samples drawn uniformly from the multiset
\(\{x_1,\ldots,x_n\}\). Then
\begin{align}
    \frac{1}{n}
    \operatorname{SU}(\mathbf{x},\mathcal{M}_\lambda(\mathbf{x}))
    =
    \mathbb{E}\left[K_\lambda^s(X,Z)\right].
    \label{eq:kernel-representation}
\end{align}
\end{lemma}

\begin{proof}
Fix two agents \(i,j\), and suppose that agent \(j\) is selected by the
mechanism. Write \(x=x_i\) and \(z=x_j\). If the threshold is drawn
uniformly from \([0,1]\), then for \(T<z\) the selected agent reports to
the right of \(T\), so the mechanism outputs \(a_T\); for \(T\ge z\), it
outputs \(b_T\). The value of the outcome at the single point \(T=z\)
does not affect the integral. Hence the expected utility of agent \(i\)
from the uniform-threshold part is exactly \(K(x,z)\), as defined in
\eqref{eq:uniform-kernel}.

If the threshold is fixed at \(T=\frac12\), then \(a_T=0\) and \(b_T=1\).
When \(z\le\frac12\), the mechanism outputs \(1\), giving utility
\(1-x\) to an agent at \(x\). When \(z>\frac12\), the mechanism outputs
\(0\), giving utility \(x\). This is exactly \(H(x,z)\), as defined in
\eqref{eq:atom-kernel}. Therefore, conditional on selecting agent \(j\),
the expected utility of agent \(i\) is \(K_\lambda(x_i,x_j)\).

Averaging over the uniformly selected agent \(j\), and summing over all
agents \(i\), we obtain
\begin{align*}
    \operatorname{SU}(\mathbf{x},\mathcal{M}_\lambda(\mathbf{x}))
    =
    \frac{1}{n}
    \sum_{j=1}^n\sum_{i=1}^n K_\lambda(x_i,x_j) = n\cdot \frac{1}{n^2}
    \sum_{i=1}^n\sum_{j=1}^n K_\lambda(x_i,x_j) = n \mathbb{E}[K_\lambda(X,Z)].
\end{align*}
Since \(X\) and \(Z\) are independent and identically distributed,
\(\mathbb{E}[K_\lambda(X,Z)]=\mathbb{E}[K_\lambda(Z,X)]
=\mathbb{E}[K_\lambda^s(X,Z)]\).
Dividing the welfare identity by \(n\) and using this equality proves
\eqref{eq:kernel-representation}.
\end{proof}

We next record explicit formulas for the kernels. These formulas will be
used to prove the pointwise quadratic lower bound.

\begin{lemma}\label{lem:kernel-formulas}
For every \(x,z\in[0,1]\),
\begin{align}
    K(x,z)
    =
    \begin{cases}
        \frac12 x^2+z^2-x+\frac34,
        & x\ge 2z,\\[3pt]
        2xz-x-z^2+\frac34,
        & 2z-1\le x\le 2z,\\[3pt]
        \frac12 x^2+z^2-2z+\frac54,
        & x\le 2z-1.
    \end{cases}
    \label{eq:uniform-kernel-formula}
\end{align}
Moreover, if \(0\le x\le z\le1\), then the symmetrized atom kernel
\(H^s(x,z)=\frac{H(x,z)+H(z,x)}{2}\) is given by
\begin{align}
    H^s(x,z)
    =
    \begin{cases}
        1-\frac{x+z}{2},
        & z\le \frac12,\\[3pt]
        \frac{x+1-z}{2},
        & x\le \frac12<z,\\[3pt]
        \frac{x+z}{2},
        & \frac12<x.
    \end{cases}
    \label{eq:atom-kernel-formula}
\end{align}
\end{lemma}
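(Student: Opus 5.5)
The plan is a direct computation from the definitions \eqref{eq:uniform-kernel} and \eqref{eq:atom-kernel}. The key structural fact is that the breakpoints of \(a_T=\max\{0,2T-1\}\) and \(b_T=\min\{1,2T\}\) sit at \(T=\frac12\), while the breakpoints of \(|x-a_T|\) and \(|x-b_T|\) sit at \(T=\frac{x+1}{2}\) (where \(a_T=x\)) and \(T=\frac x2\) (where \(b_T=x\)). So I will split the two integrals in \(K(x,z)\) at these points and integrate piecewise-linear functions of \(T\).

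For the first integral, \(\int_0^z|x-a_T|\,dT\), I note that for \(T\le\frac12\) we have \(a_T=0\), so the integrand is \(x\). For \(T>\frac12\) we have \(a_T=2T-1\), and the integrand is \(x-2T+1\) when \(T\le\frac{x+1}{2}\) and \(2T-1-x\) otherwise.

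For the second integral, \(\int_z^1|x-b_T|\,dT\), I note that for \(T\ge\frac12\) we have \(b_T=1\), so the integrand is \(1-x\). For \(T<\frac12\) we have \(b_T=2T\), and the integrand is \(x-2T\) when \(T\le\frac x2\) and \(2T-x\) otherwise.

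The three cases in \eqref{eq:uniform-kernel-formula} correspond to where \(z\) sits relative to \(\frac x2\) and \(\frac{x+1}{2}\):
\begin{itemize}
\item \(z\le\frac x2\), i.e.\ \(x\ge 2z\);
\item \(\frac x2\le z\le\frac{x+1}{2}\), i.e.\ \(2z-1\le x\le 2z\);
\item \(z\ge\frac{x+1}{2}\), i.e.\ \(x\le 2z-1\).
\end{itemize}
In each case I also have to split on whether \(z\le\frac12\) or \(z>\frac12\), which gives up to six subcases.

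Before doing the subcases, I will precompute the full-range pieces. The contribution \(\int_0^{1/2}|x-2T|\,dT\) equals \(\frac{x^2}{2}-\frac x2+\frac14\), and \(\int_{1/2}^1|x-2T+1|\,dT\) is the same by the substitution \(T\mapsto T-\frac12\). Each subcase then reduces to adding or subtracting elementary integrals of the form \(\int|x-2T|\,dT\) over partial ranges. For example, when \(x\ge2z\) (which forces \(z\le\frac12\)):
\[
K(x,z)=xz+\int_z^{1/2}|x-2T|\,dT+\frac{1-x}{2},
\]
and splitting the middle integral at \(\frac x2\) should produce \(\frac12x^2+z^2-x+\frac34\).

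The main obstacle is bookkeeping. I have to confirm that the two subcases sharing a case, namely \(z\le\frac12\) and \(z>\frac12\) within \(2z-1\le x\le2z\), collapse to the same polynomial \(2xz-x-z^2+\frac34\). I also have to check that adjacent formulas agree on the boundaries \(x=2z\) and \(x=2z-1\). For the middle case I will verify this directly. As a consistency check, at \(x=2z\) both expressions equal \(3z^2-2z+\frac34\).

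For \(H^s\) with \(x\le z\), I will read off \(H(x,z)\) and \(H(z,x)\) from \eqref{eq:atom-kernel} in three regimes:
\begin{itemize}
\item If both points are \(\le\frac12\), then \(H=1-x\) and \(H(z,x)=1-z\), which average to \(1-\frac{x+z}{2}\).
\item If \(x\le\frac12<z\), then \(H(x,z)=x\) and \(H(z,x)=1-z\), which average to \(\frac{x+1-z}{2}\).
\item If both are \(>\frac12\), then the values are \(x\) and \(z\), which average to \(\frac{x+z}{2}\).
\end{itemize}
This part is immediate.
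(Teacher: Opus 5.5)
Your proposal is correct and follows essentially the same route as the paper. Both split the integrals at the sign-change points \(T=\frac{x}{2}\) and \(T=\frac{x+1}{2}\) and at \(T=\frac12\), both verify that the subcases \(z\le\frac12\) and \(z\ge\frac12\) of the middle case give \(2xz-x-z^2+\frac34\), and both read \(H^s\) off directly from the definition of \(H\).

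The only difference is bookkeeping. The paper obtains the two outer cases from the middle-case formula by adding the correction terms \(2\int_z^{x/2}(x-b_T)\,dT\) and \(2\int_{(x+1)/2}^{z}(a_T-x)\,dT\). You instead integrate each case from scratch. Your sample computation for \(x\ge 2z\) and your boundary check at \(x=2z\) are both correct.
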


\begin{proof}
We first derive \eqref{eq:uniform-kernel-formula}. Recall that
\(a_T=\max\{0,2T-1\}\) and \(b_T=\min\{1,2T\}\). In the first integral of
\eqref{eq:uniform-kernel}, namely \(\int_0^z |x-a_T|\,dT\), the sign of
\(x-a_T\) can change only when \(a_T=x\). Since \(a_T=0\) for
\(T\le \frac12\) and \(a_T=2T-1\) for \(T\ge \frac12\), and since
\(x\in[0,1]\), this crossing occurs at \(T=\frac{x+1}{2}\). Similarly,
in the second integral \(\int_z^1 |x-b_T|\,dT\), the sign of \(x-b_T\)
can change only when \(b_T=x\). Since \(b_T=2T\) for
\(T\le \frac12\) and \(b_T=1\) for \(T\ge \frac12\), this crossing occurs
at \(T=\frac{x}{2}\). 

Therefore, the behavior of the absolute values in
\eqref{eq:uniform-kernel} is completely determined by whether the interval
\([0,z]\) contains the crossing point \(\frac{x+1}{2}\), and whether the
interval \([z,1]\) contains the crossing point \(\frac{x}{2}\). Equivalently,
the relevant comparisons are \(x\le 2z-1\) and \(x\le 2z\).

If \(2z-1\le x\le 2z\), then this is exactly the case in which
\(\frac{x+1}{2}\ge z\) and \(\frac{x}{2}\le z\), so no sign change occurs
inside either integral. Thus \(a_T\le x\) for all \(T\le z\), and
\(b_T\ge x\) for all \(T\ge z\). Hence
\[
    K(x,z)
    =
    \int_0^z (x-a_T)\,dT
    +
    \int_z^1 (b_T-x)\,dT .
\]
We compute this expression by considering whether \(z\le\frac12\) or
\(z\ge\frac12\). If \(z\le\frac12\), then \(a_T=0\) on \([0,z]\),
\(b_T=2T\) on \([z,\frac12]\), and \(b_T=1\) on
\([\frac12,1]\). Therefore
\begin{align*}
    K(x,z)
    &=
    \int_0^z x\,dT
    +
    \int_z^{1/2}(2T-x)\,dT
    +
    \int_{1/2}^1(1-x)\,dT  =
    xz
    +
    \left[T^2-xT\right]\bigg|_z^{1/2}
    +
    \frac{1-x}{2}  \\
    &=
    xz
    +
    \left(\frac14-\frac{x}{2}-z^2+xz\right)
    +
    \frac{1-x}{2}  =
    2xz-x-z^2+\frac34 .
\end{align*}
If \(z\ge\frac12\), then \(a_T=0\) on \([0,\frac12]\),
\(a_T=2T-1\) on \([\frac12,z]\), and \(b_T=1\) on \([z,1]\).
Therefore
\begin{align*}
    K(x,z)
    &=
    \int_0^{1/2}x\,dT
    +
    \int_{1/2}^z(x-(2T-1))\,dT
    +
    \int_z^1(1-x)\,dT  \\
    &=
    \frac{x}{2}
    +
    \left[xT-T^2+T\right]\bigg|_{1/2}^z
    +
    (1-z)(1-x)  \\
    &=
    \frac{x}{2}
    +
    \left(xz-z^2+z-\frac{x}{2}-\frac14\right)
    +
    (1-z)(1-x) =
    2xz-x-z^2+\frac34 .
\end{align*}
Thus, in both subcases,
\[
    K(x,z)=2xz-x-z^2+\frac34 .
\]

If \(x\ge 2z\), then \(b_T\le x\) for
\(T\in[z,\frac{x}{2}]\), and \(b_T\ge x\) for
\(T\in[\frac{x}{2},1]\). Compared with the previous case, only the second
integral changes sign on \(T\in[z,\frac{x}{2}]\). Therefore
\begin{align*}
    K(x,z)
    &=
    2xz-x-z^2+\frac34
    +
    2\int_z^{x/2}(x-b_T)\,dT .
\end{align*}
On \(T\in[z,\frac{x}{2}]\), we have \(T\le\frac{x}{2}\le\frac12\), so
\(b_T=2T\). Thus
\begin{align*}
    2\int_z^{x/2}(x-b_T)\,dT
    &=
    2\int_z^{x/2}(x-2T)\,dT  =
    2\left[xT-T^2\right]\bigg|_{z}^{x/2} =
    2\left(\frac{x^2}{4}-xz+z^2\right)  \\
    &=
    \frac{x^2}{2}-2xz+2z^2 .
\end{align*}
Hence
\[
    K(x,z)
    =
    2xz-x-z^2+\frac34
    +
    \frac{x^2}{2}-2xz+2z^2
    =
    \frac12x^2+z^2-x+\frac34 .
\]

Finally, if \(x\le 2z-1\), then \(a_T\ge x\) for
\(T\in[\frac{x+1}{2},z]\). Compared with the middle case, only the first
integral changes sign on this interval. Therefore
\begin{align*}
    K(x,z)
    &=
    2xz-x-z^2+\frac34
    +
    2\int_{(x+1)/2}^{z}(a_T-x)\,dT .
\end{align*}
On \(T\in[\frac{x+1}{2},z]\), we have \(T\ge\frac{x+1}{2}\ge\frac12\),
so \(a_T=2T-1\). Thus
\begin{align*}
    2\int_{(x+1)/2}^{z}(a_T-x)\,dT
    &=
    2\int_{(x+1)/2}^{z}(2T-1-x)\,dT =
    2\left[T^2-(1+x)T\right]\bigg|_{(x+1)/2}^{z}  \\
    &=
    2\left(z^2-(1+x)z+\frac{(1+x)^2}{4}\right)  =
    2z^2-2z-2xz+\frac{(1+x)^2}{2}.
\end{align*}
Combining this with the middle-case expression gives
\begin{align*}
    K(x,z)
    &=
    2xz-x-z^2+\frac34
    +
    2z^2-2z-2xz+\frac{(1+x)^2}{2}  =
    \frac12x^2+z^2-2z+\frac54 .
\end{align*}
This proves \eqref{eq:uniform-kernel-formula}.

It remains to prove \eqref{eq:atom-kernel-formula}. When
\(T=\frac12\), the two candidates are \(0\) and \(1\). If the selected
agent is weakly to the left of \(\frac12\), the mechanism outputs \(1\);
otherwise it outputs \(0\). Hence \(H(x,z)=1-x\) when \(z\le\frac12\),
and \(H(x,z)=x\) when \(z>\frac12\). Symmetrizing this expression gives
the three cases in \eqref{eq:atom-kernel-formula}. For the second case, the two terms are as follows:

In \(H(x,z)\), the agent whose utility is measured is
located at \(x\), while the selected agent is located at \(z\). Since
\(z>\frac12\), the mechanism outputs \(0\), and the utility is \(x\).
Thus \(H(x,z)=x\). In \(H(z,x)\), the roles of \(x\) and \(z\) are
reversed: the agent whose utility is measured is located at \(z\), while
the selected agent is located at \(x\). Since \(x\le\frac12\), the
mechanism outputs \(1\), and the utility is \(1-z\).
\end{proof}

The next lemma is the main technical ingredient of the upper-bound proof.
It gives a pointwise quadratic lower bound for the symmetrized kernel when
\(\lambda=\frac{93}{100}\). The form \(A+B(x+z)+Cxz\) is useful because its expectation under independent, identically distributed samples depends only on their common mean. The rational coefficients below admit an exact certificate, which is verified in Appendix~\ref{app:upper-kernel}.

\begin{lemma}\label{lem:quadratic-kernel-lower-bound}
Let \(\lambda=\frac{93}{100}\). For every \(x,z\in[0,1]\),
\begin{align}
    K_{93/100}^s(x,z)
    \ge
    A+B(x+z)+Cxz,
    \qquad
    A=\frac{7641}{10000},
    \quad
    B=-\frac{2741}{4000},
    \quad
    C=\frac{6869}{5000}.
    \label{eq:quadratic-kernel-lower-bound}
\end{align}
\end{lemma}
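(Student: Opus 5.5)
By symmetry of both sides in \(x\) and \(z\), it suffices to prove \eqref{eq:quadratic-kernel-lower-bound} on the triangle \(0\le x\le z\le 1\). On this region Lemma~\ref{lem:kernel-formulas} gives everything explicitly. We have
\[
K^s_\lambda(x,z)=\tfrac{\lambda}{2}\bigl(K(x,z)+K(z,x)\bigr)+(1-\lambda)H^s(x,z),
\]
where \(H^s\) is read off from \eqref{eq:atom-kernel-formula}. For \(K(x,z)\) with \(x\le z\), the case \(x\ge 2z\) can occur only at \(x=z=0\). Hence \(K(x,z)\) is given by the middle formula when \(x\ge 2z-1\) and by the third formula when \(x\le 2z-1\). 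For \(K(z,x)\), the roles are swapped. Since \(z\ge x\), we always have \(z\ge 2x-1\), so the relevant formulas are the middle one (when \(z\le 2x\)) and the first one \(\tfrac12z^2+x^2-z+\tfrac34\) (when \(z\ge 2x\)).

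The triangle is therefore partitioned by the lines \(z=\tfrac12\), \(x=\tfrac12\), \(x=2z-1\) and \(z=2x\) into a small number of polygonal cells. On each cell, the difference
\[
D(x,z)=K^s_{93/100}(x,z)-A-B(x+z)-Cxz
\]
is a single explicit quadratic polynomial with rational coefficients. The plan is to list these cells, at most about six nonempty ones, and write out \(D\) on each.

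On each cell we then show \(D\ge 0\) by analysing a quadratic on a polygon. The minimum of a quadratic over a convex polygon is attained either at an interior critical point, which matters only when the Hessian is positive semidefinite, or on an edge. On an edge the restriction is a univariate quadratic, so its minimum is at an endpoint or at a vertex of the parabola. This yields a finite list of candidate points per cell: the vertices, the edge-wise parabola vertices, and possibly one interior stationary point. At each candidate, \(D\) is a rational number whose sign can be checked exactly. This is what I expect the appendix certificate to consist of, possibly repackaged as SOS or completed-square identities per cell.

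The main obstacle is that the constants are evidently tuned so that the bound is nearly tight. \(A\), \(B\) and \(C\) were chosen to optimise the final ratio \(\bigl(2\sqrt{AC}+2B\bigr)^{-1}\), which matches \(\alpha_0\) after minimising over the mean \(\mu\). So \(D\) will have near-zero minima, likely along the diagonal \(x=z\) or at points like \(x=z=\tfrac12\) and the corners \((0,1)\) and \((0,0)\). Exact rational arithmetic will therefore be needed rather than crude bounds. The delicate cells are those adjacent to \(z=\tfrac12\) and \(x=\tfrac12\), where \(H^s\) jumps. For these I would first evaluate the one-sided limits at the discontinuity, since the infimum is taken on the closure of each cell, and then handle the edge parabolas there by completing the square explicitly.
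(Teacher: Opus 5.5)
Your plan is correct and is essentially the paper's proof: reduce to $0\le x\le z\le 1$, cut along $z=2x$, $z=\frac{1+x}{2}$, $z=\frac12$ and $x=\frac12$, then certify that the quadratic $10000D$ is nonnegative on the closure of each cell by checking stationary points and edge parabolas in exact rational arithmetic, which is exactly what the paper's appendix does. When you carry it out, note that there are eight two-dimensional cells rather than about six, and that the tightest margins are not at the jump lines of $H^s$ but at $(1,1)$, where $D=10^{-4}$, and at an interior stationary point of the cell $z\ge 2x$, $z\ge\frac{1+x}{2}$, $x\le\frac12<z$ (near $(0.313,0.842)$), where the Hessian is positive definite and $D\approx 3.5\times10^{-5}$.
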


\begin{proof}
Let
\begin{align}
    P(x,z)=A+B(x+z)+Cxz.
    \label{eq:quadratic-P}
\end{align}
Both \(K_{93/100}^s(x,z)\) and \(P(x,z)\) are symmetric in \(x\) and
\(z\). Hence it suffices to prove the claim for
\(0\le x\le z\le1\).

By Lemma~\ref{lem:kernel-formulas}, the uniform-threshold part of the
kernel is piecewise quadratic. On the triangle \(0\le x\le z\le1\), the
pieces are determined by the two comparisons \(z\le 2x\) and
\(z\le \frac{1+x}{2}\). Define
\begin{align}
    U_{AA}
    &=
    \left\{
        (x,z): 0\le x\le z\le1,\ 
        z\le 2x,\ 
        z\le \frac{1+x}{2}
    \right\}, \notag\\
    U_{AB}
    &=
    \left\{
        (x,z): 0\le x\le z\le1,\ 
        z\ge 2x,\ 
        z\le \frac{1+x}{2}
    \right\}, \notag\\
    U_{CA}
    &=
    \left\{
        (x,z): 0\le x\le z\le1,\ 
        z\le 2x,\ 
        z\ge \frac{1+x}{2}
    \right\}, \notag\\
    U_{CB}
    &=
    \left\{
        (x,z): 0\le x\le z\le1,\ 
        z\ge 2x,\ 
        z\ge \frac{1+x}{2}
    \right\}.
    \label{eq:uniform-regions}
\end{align}
The atom part, corresponding to the event \(T=\frac12\), is determined by
the relative positions of \(x,z\) and \(\frac12\). Define
\begin{align}
    V_L
    &=
    \left\{(x,z):0\le x\le z\le \frac12\right\}, \notag\\
    V_X
    &=
    \left\{(x,z):0\le x\le \frac12<z\le1\right\}, \notag\\
    V_R
    &=
    \left\{(x,z):\frac12<x\le z\le1\right\}.
    \label{eq:atom-regions}
\end{align}
The nonempty intersections of the regions in
\eqref{eq:uniform-regions} and \eqref{eq:atom-regions} cover
the triangle \(0\le x\le z\le1\), with overlaps only on uniform-region boundaries where the polynomial formulas agree. On each such
intersection,
\[
    D(x,z)=K_{93/100}^s(x,z)-P(x,z)
\]
is a quadratic polynomial in \(x\) and \(z\).

Table~\ref{tab:su-upper-certificate} in Appendix~\ref{app:upper-kernel} gives an exact certificate
for the nonnegativity of \(D\). For each nonempty intersection \(S\),
let \(Q_S\) be the quadratic polynomial equal to \(10000D\) on \(S\).
The table minimizes the continuous polynomial \(Q_S\) over the closure
\(\overline S\), by checking its stationary points and boundary segments.
Since every listed minimum is nonnegative, \(D=Q_S/10000\ge0\) on \(S\).
This argument uses the continuity of each polynomial, not of \(D\),
which can jump at the atom-region boundaries. The nonempty intersections
cover the triangle, with boundary values assigned according to
\eqref{eq:atom-regions}, so \(D(x,z)\ge0\) everywhere on it. Thus
\(K_{93/100}^s(x,z)\ge P(x,z)\) for all \(0\le x\le z\le1\), and by
symmetry for all \(x,z\in[0,1]\). This proves
\eqref{eq:quadratic-kernel-lower-bound}.
\end{proof}

We now use the quadratic kernel lower bound to derive the approximation
guarantee.

\begin{lemma}\label{lem:approximation-from-kernel}
Let \(\lambda=\frac{93}{100}\), and let \(A,B,C\) be the constants in
\eqref{eq:quadratic-kernel-lower-bound}. For every profile \(\mathbf{x}\in[0,1]^n\),
\begin{align}
    \operatorname{SU}(\mathbf{x},\mathcal{M}_{93/100}(\mathbf{x}))
    \ge
    r_0\cdot \operatorname{OPT}(\mathbf{x}),
    \label{eq:approximation-guarantee}
\end{align}
where
\begin{align}
    r_0
    =
    2\sqrt{AC}+2B
    =
    2\sqrt{
        \frac{7641}{10000}\cdot\frac{6869}{5000}
    }
    -
    \frac{2741}{2000}.
    \label{eq:r0-definition}
\end{align}
\end{lemma}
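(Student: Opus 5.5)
Combining Lemma~\ref{lem:kernel-representation} with Lemma~\ref{lem:quadratic-kernel-lower-bound} reduces the bound to a one-variable inequality in the mean location, which we then compare with an upper bound on \(\operatorname{OPT}\). Let \(X,Z\) be independent uniform samples from the multiset \(\{x_1,\ldots,x_n\}\), and let \(\mu=\frac1n\sum_i x_i\) be their common mean. Taking expectations of \eqref{eq:quadratic-kernel-lower-bound} and using independence gives \(\mathbb{E}[XZ]=\mu^2\), so
\[
    \frac1n\operatorname{SU}(\mathbf{x},\mathcal{M}_{93/100}(\mathbf{x}))
    \ge A+2B\mu+C\mu^2 .
\]

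Next we bound the optimum. The function \(y\mapsto\sum_i|x_i-y|\) is convex, so its maximum over \([0,1]\) is attained at an endpoint. Hence
\[
    \frac1n\operatorname{OPT}(\mathbf{x})=\max\{\mu,1-\mu\},
\]
because \(\sum_i|x_i-0|=n\mu\) and \(\sum_i|x_i-1|=n(1-\mu)\). It therefore suffices to show
\[
    A+2B\mu+C\mu^2\ge r_0\max\{\mu,1-\mu\}\qquad\text{for all }\mu\in[0,1].
\]

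To prove this, set \(m=\max\{\mu,1-\mu\}\in[\tfrac12,1]\). The left side is symmetric under \(\mu\mapsto1-\mu\) only if \(C=-2B\). Here \(C=1.3738\) and \(-2B=1.3705\), so the symmetry fails, and we check both branches.

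On the branch \(\mu\ge\frac12\) we have \(m=\mu\). By AM--GM, \(A+C\mu^2\ge2\sqrt{AC}\,\mu\), so
\[
    A+2B\mu+C\mu^2\ge(2\sqrt{AC}+2B)\mu=r_0\mu .
\]
Equality holds at \(\mu=\sqrt{A/C}\approx0.7458\), which lies in the branch. This explains the form of \(r_0\).

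On the branch \(\mu\le\frac12\) we have \(m=1-\mu\). Substitute \(\nu=1-\mu\in[\tfrac12,1]\), so the left side becomes
\[
    (A+2B+C)+(-2B-2C)\nu+C\nu^2 .
\]
The same AM--GM step then yields the bound \(2\sqrt{(A+2B+C)C}-2B-2C\) on the ratio. Here \(A+2B+C=0.76455\) slightly exceeds \(A=0.7641\), and \(-2B-2C=-1.3771\) is below \(2B=-1.3705\). The net comparison is therefore a delicate numeric check, and this is the main obstacle. The bound gives approximately \(2\cdot1.02484-1.3771\approx0.6727\) versus \(r_0\approx2\cdot1.02455-1.3705\approx0.6786\), so the naive AM--GM bound on this side looks too weak.

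The branch \(\mu\le\frac12\) therefore needs a different argument. First I would verify that the minimizer of \(q(\nu)=(A+2B+C)+(-2B-2C)\nu+C\nu^2-r_0\nu\) over \([\tfrac12,1]\) is nonnegative by an exact rational computation: compute the discriminant \((r_0+2B+2C)^2-4C(A+2B+C)\) and show it is \(\le0\), or that its vertex lies outside \([\tfrac12,1]\). If that fails, I would recheck the appendix coefficients, for instance whether the atom part makes the kernel asymmetric so that only the \(\mu\ge\tfrac12\) side binds after using the mirror symmetry of the mechanism. The mechanism is invariant under the reflection \(x\mapsto1-x\), since \(a_{1-T}=1-b_T\) and the tie convention at \(\hat x_J=T\) is measure zero. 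Applying Lemma~\ref{lem:quadratic-kernel-lower-bound} to the reflected profile \(1-\mathbf{x}\) then gives the same lower bound with \(\mu\) replaced by \(1-\mu\). Taking the better of the two bounds reduces every case to \(m=\mu'\ge\tfrac12\), where AM--GM finishes the proof. This reflection argument is the step I would actually rely on; I would verify the invariance carefully, including the atom at \(T=\tfrac12\), where ties at \(\tfrac12\) form a measure-positive event. That event must be handled by noting that the bound is continuous in \(\mu\), or by accepting the asymmetric branch check.
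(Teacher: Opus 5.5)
Your outline is the paper's: reduce to $A+2B\bar x+C\bar x^2$ via Lemmas~\ref{lem:kernel-representation} and~\ref{lem:quadratic-kernel-lower-bound}, settle $\bar x\ge\frac12$ by AM--GM, and treat $\bar x\le\frac12$ by reflecting the profile. You are right that reflection is unavoidable. With the correct value $A+2B+C=0.7674$ (not $0.76455$), the direct bound on that branch is minimized at $\nu\approx0.747\in[\frac12,1]$, with value $2\sqrt{(A+2B+C)C}-2B-2C\approx0.6764<r_0\approx0.6786$. So your discriminant check must fail, and the ``asymmetric branch check'' is not an available fallback.

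The gap is in how you close the reflection step. The mechanism is not reflection-invariant. If some agent sits at $\frac12$, the event $\{T=\frac12,\ x_J=\frac12\}$ has positive probability, and on it the runs on $\mathbf{x}$ and $\mathbf{x}^R$ both output $1$ instead of mirror-image points. Hence ``taking the better of the two bounds'' is unjustified for such profiles. Neither of your remedies works: the asymmetric check fails numerically, and continuity of the \emph{bound} in $\mu$ says nothing about the mechanism's welfare, which jumps when an agent crosses $\frac12$.

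The missing idea is that you need only a one-sided comparison, not invariance. Couple the two runs by using the same $J$ and thresholds $T$ and $1-T$. Off the tie event the outcomes are almost surely reflections with equal welfare. On the tie event both runs output $1$, giving welfare $n(1-\bar x)$ on $\mathbf{x}$ versus $n\bar x$ on $\mathbf{x}^R$. Since $\bar x\le\frac12$, this gives $\operatorname{SU}(\mathbf{x},\mathcal{M}_{93/100}(\mathbf{x}))\ge\operatorname{SU}(\mathbf{x}^R,\mathcal{M}_{93/100}(\mathbf{x}^R))\ge r_0\operatorname{OPT}(\mathbf{x}^R)=r_0\operatorname{OPT}(\mathbf{x})$, which is exactly how the paper concludes.

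A continuity argument can also be rescued, but it must be one-sided continuity of the mechanism's welfare, not of the bound. For $\bar x<\frac12$, move the agents at $\frac12$ to $\frac12-\epsilon$. This removes all ties and keeps $\bar x<\frac12$. The welfare changes continuously because $K$ is continuous and $H(x,z)$ is left-continuous in $z$ at $\frac12$. Moving those agents to the right instead would not work.
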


\begin{proof}
Let \(\bar{x}=\frac1n\sum_{i=1}^n x_i\). We first consider the case
\(\bar{x}\ge \frac12\). By Lemma~\ref{lem:kernel-representation} and
Lemma~\ref{lem:quadratic-kernel-lower-bound},
\begin{align}
    \frac{1}{n}
    \operatorname{SU}(\mathbf{x},\mathcal{M}_{93/100}(\mathbf{x}))
    &=
    \mathbb{E}\left[K_{93/100}^s(X,Z)\right] \ge
    \mathbb{E}\left[A+B(X+Z)+CXZ\right] \notag\\
    &=
    A+2B\bar{x}+C\bar{x}^2.
    \label{eq:expected-su-average-bound}
\end{align}
Here \(X\) and \(Z\) are independent samples drawn uniformly from the
multiset \(\{x_1,\ldots,x_n\}\). Since \(\bar{x}\ge\frac12\), the optimal
solution places the facility at \(0\), and hence
\(\operatorname{OPT}(\mathbf{x})=n\bar{x}\). Therefore
\begin{align}
    \frac{
        \operatorname{SU}(\mathbf{x},\mathcal{M}_{93/100}(\mathbf{x}))
    }{
        \operatorname{OPT}(\mathbf{x})
    }
    \ge
    \frac{A+2B\bar{x}+C\bar{x}^2}{\bar{x}}.
    \label{eq:ratio-as-function}
\end{align}
For \(t\in[\frac12,1]\), define
\[
    f(t)=\frac{A+2Bt+Ct^2}{t}=\frac{A}{t}+2B+Ct.
\]
The function \(f\) is minimized at \(t=\sqrt{A/C}\), which lies in
\([\frac12,1]\). Thus
\begin{align}
    f(t)
    \ge
    2\sqrt{AC}+2B
    =
    r_0
    \qquad
    \text{for all }t\in\left[\frac12,1\right].
    \label{eq:f-minimum}
\end{align}
Combining \eqref{eq:ratio-as-function} and \eqref{eq:f-minimum} proves
\eqref{eq:approximation-guarantee} when \(\bar{x}\ge\frac12\).

It remains to handle the case \(\bar{x}\le\frac12\). Let
\(\mathbf{x}^{R}\) be the reflected profile, defined by
\(x_i^{R}=1-x_i\) for every agent \(i\). Then
\(\frac1n\sum_i x_i^R=1-\bar{x}\ge\frac12\), and
\(\operatorname{OPT}(\mathbf{x}^{R})=\operatorname{OPT}(\mathbf{x})\).

We compare the expected social utility of \(\mathcal{M}_{93/100}\) on
\(\mathbf{x}\) and on \(\mathbf{x}^{R}\). Couple the two executions as
follows. Use the same selected agent \(J\). If the threshold in the
execution on \(\mathbf{x}\) is \(T\), use threshold \(1-T\) in the
execution on \(\mathbf{x}^{R}\). This coupling is valid because the
threshold distribution is invariant under the map \(T\mapsto 1-T\).

For every threshold \(T\), the corresponding candidates satisfy \(a_{1-T}=1-b_T\) and \(b_{1-T}=1-a_T\).
If \(x_J\ne T\), the two coupled outcomes are reflections of each other.
Therefore their social utilities are equal. The only event on which the
tie-breaking can matter has positive probability only when
\(T=\frac12\) and \(x_J=\frac12\). On this event, both executions output
\(1\). The social utility on \(\mathbf{x}\) is then
\(n-\sum_i x_i=n(1-\bar{x})\), whereas the social utility on
\(\mathbf{x}^{R}\) is \(\sum_i x_i=n\bar{x}\). Since
\(\bar{x}\le\frac12\), the former is at least the latter. Hence
\begin{align}
    \operatorname{SU}(\mathbf{x},\mathcal{M}_{93/100}(\mathbf{x}))
    \ge
    \operatorname{SU}(\mathbf{x}^{R},\mathcal{M}_{93/100}(\mathbf{x}^{R}))
    \ge
    r_0\cdot \operatorname{OPT}(\mathbf{x}^{R})
    =
    r_0\cdot \operatorname{OPT}(\mathbf{x}). \notag
\end{align}
This proves \eqref{eq:approximation-guarantee} for \(\bar{x}\le\frac12\) as well.
\end{proof}

We can now prove the main theorem.

\begin{proof}[Proof of Theorem~\ref{thm:su-upper}]
By Lemma~\ref{lem:sp}, \(\mathcal{M}_{93/100}\) is universally
strategyproof. By Lemma~\ref{lem:approximation-from-kernel},
\[
    \operatorname{SU}(\mathbf{x},\mathcal{M}_{93/100}(\mathbf{x}))
    \ge
    r_0\cdot \operatorname{OPT}(\mathbf{x})
\]
for every profile \(\mathbf{x}\), where \(r_0\) is defined in
\eqref{eq:r0-definition}. Therefore the approximation ratio of
\(\mathcal{M}_{93/100}\) is at most
\[
    \frac{1}{r_0}
    =
    \left(
        2\sqrt{
            \frac{7641}{10000}\cdot\frac{6869}{5000}
        }
        -
        \frac{2741}{2000}
    \right)^{-1}
    \approx
    1.473584273
    <
    1.47359.
\]
This is strictly smaller than \(\frac32\).
\end{proof}

\subsection{Lower Bound}

We now turn to lower bounds. Feigenbaum et al.
\cite{feigenbaum2020strategic} proved a lower bound of
\(\frac{2}{\sqrt{3}}\approx 1.15470\) for randomized strategyproof
mechanisms. We give an explicit lower bound that improves this to
\(\frac{105}{88}\approx 1.19318\). The improvement is approximately \(0.03848\).

\begin{theorem}\label{thm:su-lower}
Every randomized strategyproof mechanism has approximation ratio at least \(\frac{105}{88} \approx 1.19318\) for social utility.
\end{theorem}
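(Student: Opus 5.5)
The plan is to fix a small family of concrete profiles, link their output distributions through strategyproofness, and show that the resulting linear constraints make ratio below \(\frac{105}{88}\) infeasible. I would carry this out with two agents (or a small \(n\)) and profiles supported on the grid \(\{0,\frac14,\frac12,\frac34,1\}\). Assume for contradiction that a strategyproof \(f\) has ratio \(\alpha<\frac{105}{88}\). For each chosen profile \(\mathbf{x}^{(k)}\), let \(Y_k\sim f(\mathbf{x}^{(k)})\). The key observation is that every quantity we need is of the form \(\mathbb{E}|x-Y_k|\) for a grid point \(x\). For two agents, \(\operatorname{SU}\) is also a sum of such terms. So the unknowns are the finitely many numbers
\[
g_k(x)=\mathbb{E}\bigl[|x-Y_k|\bigr],\qquad x\in\{0,\tfrac14,\tfrac12,\tfrac34,1\}.
\]

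There are three types of constraints on these unknowns, and all of them are linear.

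\begin{enumerate}
\item \textbf{Approximation.} For each profile,
\[
\sum_i g_k\bigl(x^{(k)}_i\bigr)\ge \operatorname{OPT}\bigl(\mathbf{x}^{(k)}\bigr)/\alpha .
\]
\item \textbf{Strategyproofness.} If \(\mathbf{x}^{(k')}\) is obtained from \(\mathbf{x}^{(k)}\) by moving agent \(i\) from \(x^{(k)}_i\) to \(x'\), then
\[
g_k\bigl(x^{(k)}_i\bigr)\ge g_{k'}\bigl(x^{(k)}_i\bigr),
\qquad
g_{k'}(x')\ge g_k(x').
\]
\item \textbf{Realizability.} Each \(g_k\) must be the expected-distance function of some distribution on \([0,1]\). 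Necessary linear conditions are
\[
g_k(0)+g_k(1)=1,\qquad g_k \text{ convex},\qquad g_k \text{ 1-Lipschitz},
\]
together with the sharper identities coming from \(g_k(x)=\mathbb{E}Y_k-x+2\,\mathbb{E}[(x-Y_k)^+]\). These relate values at neighbouring grid points to the probability mass in the intervals between them.
\end{enumerate}

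For the profile family I would start from the configurations underlying Feigenbaum et al.'s \(\frac{2}{\sqrt3}\) bound. These are profiles like \((\frac12,1)\), \((1,1)\), \((0,1)\) and their reflections, all connected by single-agent deviations. I would then enrich them with quarter-point profiles such as \((\frac14,1)\) and \((\frac34,1)\). The purpose is that the convexity and Lipschitz constraints on \(g_k\) can bite at interior points.

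Infeasibility would be certified by LP duality. Concretely, I would find nonnegative multipliers on the inequalities above whose weighted sum produces
\[
0\ge c\Bigl(\tfrac{105}{88}-\alpha\Bigr)\Big/\alpha
\]
for some \(c>0\). The denominator \(88=8\cdot 11\) suggests that the extremal solution has rational probabilities with denominators involving \(11\). This solution, together with its dual, would be found by solving the finite LP, and the written proof would then just list the multipliers and verify the summation.

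By symmetry under \(x\mapsto 1-x\), I may assume without loss of generality that \(f\) is reflection-symmetric: averaging \(f\) with its reflection preserves strategyproofness and the ratio. This roughly halves the number of variables and should make the certificate short enough to present.

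The main obstacle is choosing a profile family that is rich enough to push the bound from \(\frac{2}{\sqrt3}\) up to exactly \(\frac{105}{88}\), while keeping the realizability constraints faithful. If I use only convexity, the relaxation may be too loose and give a weaker bound. My first attempt would be to impose the exact characterization on the grid: \(g_k\) is realizable if and only if it is piecewise-linear-convex with slopes in \([-1,1]\), and matches \(0\) and \(1\) correctly at the endpoints. This is equivalent to letting the variables be the masses of \(Y_k\) in each grid cell plus the conditional means. If the bound comes out short, I would add a further intermediate profile (for instance with three agents) and rerun the LP until the optimum equals \(\frac{105}{88}\).
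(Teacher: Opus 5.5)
Your framework matches the paper's: a finite family of two-agent profiles, the approximation inequalities, SP inequalities between profiles that differ in one report, and an LP-duality certificate. The paper collects all terms of one profile into a function $\phi_{ab}$ and bounds $\mathbb{E}[\phi_{ab}(Y_{ab})]\le\max_y\phi_{ab}(y)$; this is the dual of your realizability constraints. Your grid characterization of realizable $g_k$ is correct, and the reflection-symmetrization is harmless.

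The gap is that the whole content of this theorem is the certificate, and you never produce one. ``Rerun the LP until the optimum equals $\frac{105}{88}$'' is a search procedure, not a proof. Worse, the instantiation you actually propose cannot work.

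Rescale your grid to $\{0,1,2,3,4\}\subset[0,4]$ and take the anonymous two-agent mechanism that places the facility
\begin{itemize}
\item at $4$ on $(0,0),(0,1),(0,2),(1,1),(1,2)$;
\item at $0$ on $(2,3),(2,4),(3,3),(3,4),(4,4)$;
\item at $2$ on $(0,4)$;
\item uniformly on $\{0,4\}$ on $(1,3)$ and $(2,2)$;
\item at $4$ or $1$ with probabilities $\frac23,\frac13$ on $(0,3)$;
\item at $0$ or $3$ with probabilities $\frac23,\frac13$ on $(1,4)$.
\end{itemize}
Fix the other agent's position and compare the five resulting profiles. A direct check shows that no agent gains from any grid misreport. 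The mechanism is optimal on every profile except $(0,3)$ and $(1,4)$, where it gets $\frac{13}{3}$ against $\operatorname{OPT}=5$.

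So every two-agent LP on your grid is feasible at $\alpha=\frac{15}{13}\approx1.1538$. That is below $\frac{105}{88}$, and even below the old bound $\frac{2}{\sqrt3}$. Conversely, $\frac{15}{13}$ is exactly what this grid certifies. In the paper's notation $D_{ab}(t)=\mathbb{E}|t-Y_{ab}|$ and $W_{ab}=D_{ab}(a)+D_{ab}(b)$, three facts combine to give $\alpha\ge\frac{15}{13}$:
\begin{itemize}
\item SP gives $D_{13}(1)\ge D_{03}(1)\ge\frac32(W_{03}-3)$;
\item the mirror image gives $D_{13}(3)\ge D_{14}(3)\ge\frac32(W_{14}-3)$;
\item $D_{13}(1)+D_{13}(3)\le 4$.
\end{itemize}
No choice of multipliers rescues your profile family; the grid itself must change, and your plan does not say what change works.

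The paper's certificate lives on the eighth grid, working on $[0,8]$:
\begin{itemize}
\item It uses approximation constraints only for $(1,5),(1,6),(2,7),(3,7)$, with $\operatorname{OPT}=10,9,9,10$ and weights $72,60,60,72$, so the weighted welfare is at least $2520/\alpha$.
\item It adds fifteen SP slacks routed through the auxiliary profiles $(2,5),(2,6),(3,5),(3,6),(4,5)$.
\item The nine grouped functions $\phi_{ab}$ have pointwise maxima summing to $2112$, which yields $\alpha\ge\frac{2520}{2112}=\frac{105}{88}$.
\end{itemize}
No agent in any of these profiles sits at an endpoint of the segment. So enlarging a family built from endpoint profiles like $(\frac12,1),(1,1),(0,1)$ with quarter points would not have led you to it.
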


\begin{proof}
By scaling the interval, it suffices to prove the lower bound on
\([0,8]\). Consider two agents. For a profile \((a,b)\), let \(Y_{ab}\)
denote the random facility location selected by the mechanism on that
profile, and define \(D_{ab}(t)=\mathbb{E}[|t-Y_{ab}|]\).
Also let \(W_{ab}=D_{ab}(a)+D_{ab}(b)\) be the expected social utility on
profile \((a,b)\). On the interval \([0,8]\), the optimal social utility
for profile \((a,b)\) is \(\operatorname{OPT}(a,b)=\max\{a+b,16-a-b\}\).

Suppose the mechanism has approximation ratio \(\alpha\), and let
\(\beta=\frac{1}{\alpha}\). Therefore we have \(W_{ab}\ge \beta\operatorname{OPT}(a,b)\).
We now compute the optimal social utility for the four profiles used in
the proof:
\[
\begin{aligned}
    \operatorname{OPT}(1,5)
    &=\max\{1+5,16-1-5\}=10,
    &\operatorname{OPT}(1,6)
    =\max\{1+6,16-1-6\}=9,\\
    \operatorname{OPT}(2,7)
    &=\max\{2+7,16-2-7\}=9,
    &\operatorname{OPT}(3,7)
    =\max\{3+7,16-3-7\}=10.
\end{aligned}
\]
Thus,
\begin{align}
    W_{15}\ge 10\beta,\qquad
    W_{16}\ge 9\beta,\qquad
    W_{27}\ge 9\beta,\qquad
    W_{37}\ge 10\beta.
    \label{eq:manual-lower-welfare-constraints}
\end{align}
Multiplying these four inequalities by \(72,60,60,72\), respectively,
gives
\begin{align}
    72W_{15}+60W_{16}+60W_{27}+72W_{37}
    \ge 2520\beta.
    \label{eq:manual-lower-weighted-welfare}
\end{align}

We next add a nonnegative linear combination of strategyproofness
constraints. The resulting expression can be grouped by profile into functions of the facility location, each with a computable pointwise upper bound. This yields an upper bound independent of the mechanism's output distributions. Each term below is nonnegative by strategyproofness:
\begin{align*}
    \Sigma
    =&\ 15(D_{15}(5)-D_{16}(5))
      +42(D_{25}(2)-D_{15}(2)) +20(D_{25}(5)-D_{26}(5))\\
      &
      +15(D_{25}(5)-D_{27}(5)) +30(D_{26}(2)-D_{16}(2))
      +30(D_{26}(6)-D_{27}(6)) \\
      &+45(D_{35}(3)-D_{15}(3))
      +49(D_{35}(3)-D_{25}(3)) +49(D_{35}(5)-D_{36}(5))\\
      &
      +45(D_{35}(5)-D_{37}(5)) +15(D_{36}(3)-D_{16}(3))
      +20(D_{36}(3)-D_{26}(3))\\
      &+42(D_{36}(6)-D_{37}(6))
      +15(D_{37}(3)-D_{27}(3)) +141(D_{45}(4)-D_{35}(4)) \ge 0.
\end{align*}
For example, \(D_{15}(5)-D_{16}(5)\ge0\) says that an agent whose true
location is \(5\), while the other agent is at \(1\), cannot benefit by
misreporting \(6\). The other terms are interpreted similarly. Hence
\(\Sigma\ge0\).

Define $\Phi$ as follows; the inequality follows from \eqref{eq:manual-lower-weighted-welfare} and the nonnegativity of the strategyproofness constraints:
\begin{align}
    \Phi=72W_{15}+60W_{16}+60W_{27}+72W_{37}+\Sigma\ge 2520\beta. \label{eq:manual-lower-phi-lower}
\end{align}

It remains to upper bound \(\Phi\). Let \(d_t(y)=|t-y|\). Recall that
\[
    D_{ab}(t)=\mathbb{E}[|t-Y_{ab}|]
    =
    \mathbb{E}[d_t(Y_{ab})].
\]
Thus, once all terms involving the same profile \((a,b)\) are collected,
they can be written as the expectation of a single function of the random
outcome \(Y_{ab}\).

More precisely, for each profile \((a,b)\), we define an auxiliary
function \(\phi_{ab}:[0,8]\to\mathbb{R}\) as follows: collect all terms in
\(\Phi\) involving \(D_{ab}(\cdot)\), and replace every \(D_{ab}(t)\) by
\(d_t(y)\). Then the contribution of profile \((a,b)\) to \(\Phi\) is \(\mathbb{E}\left[\phi_{ab}(Y_{ab})\right]\).
Therefore
\[
    \Phi
    =
    \sum_{(a,b)}
    \mathbb{E}\left[\phi_{ab}(Y_{ab})\right],
\]
where profiles with zero total coefficient are omitted.

For example, consider profile \((1,5)\). From the weighted welfare term,
we get
\(72W_{15}
    =
    72D_{15}(1)+72D_{15}(5)\).
From the strategyproofness slacks in \(\Sigma\), the terms involving
\(Y_{15}\) are \(15D_{15}(5)-42D_{15}(2)-45D_{15}(3)\).
Hence the total contribution of profile \((1,5)\) is
\[
    72D_{15}(1)+87D_{15}(5)-42D_{15}(2)-45D_{15}(3)
    =
    \mathbb{E}\left[\phi_{15}(Y_{15})\right],
\]
where
\[
    \phi_{15}(y)
    =
    72d_1(y)+87d_5(y)-42d_2(y)-45d_3(y).
\]
The other functions \(\phi_{ab}\) are obtained in the same way. They are
listed in Table~\ref{tab:manual-lower-coefficients}. In the table, \(d_t\) abbreviates \(d_t(y)\).

\begin{center}
\captionof{table}{Coefficient functions and their pointwise maxima
in the social-utility lower-bound certificate.}
\label{tab:manual-lower-coefficients}
\renewcommand{\arraystretch}{1.15}
\begin{tabular}{clc}
\toprule
Profile & \(\phi_{ab}(y)\) & \(\max_{y\in[0,8]}\phi_{ab}(y)\)\\
\midrule
\((1,5)\)
&
\(72d_1+87d_5-42d_2-45d_3\)
&
\(288\)
\\

\((1,6)\)
&
\(60d_1+60d_6-15d_5-30d_2-15d_3\)
&
\(240\)
\\

\((2,5)\)
&
\(42d_2+35d_5-49d_3\)
&
\(112\)
\\

\((2,6)\)
&
\(30d_2+30d_6-20d_5-20d_3\)
&
\(80\)
\\

\((2,7)\)
&
\(60d_2+60d_7-15d_5-30d_6-15d_3\)
&
\(240\)
\\

\((3,5)\)
&
\(94d_3+94d_5-141d_4\)
&
\(188\)
\\

\((3,6)\)
&
\(35d_3+42d_6-49d_5\)
&
\(112\)
\\

\((3,7)\)
&
\(87d_3+72d_7-45d_5-42d_6\)
&
\(288\)
\\

\((4,5)\)
&
\(141d_4\)
&
\(564\)
\\
\bottomrule
\end{tabular}
\end{center}

Each function \(\phi_{ab}\) is piecewise linear, with breakpoints only at
integer points in \(\{0,1,\ldots,8\}\) because $d_t(y)=|t-y|$. Therefore its maximum over \([0,8]\) is attained at one of these integer
points. Table~\ref{tab:manual-lower-values} lists the values at all
integer points.

\begin{center}
\captionof{table}{Integer-point evaluations of the functions
\(\phi_{ab}\) used in Theorem~\ref{thm:su-lower}.}
\label{tab:manual-lower-values}
\label{app:manual-lower-table}
\small
\setlength{\tabcolsep}{4pt}
\renewcommand{\arraystretch}{1.1}
\begin{tabular}{crrrrrrrrrr}
\toprule
Profile
& \(y=0\) & \(y=1\) & \(y=2\) & \(y=3\) & \(y=4\)
& \(y=5\) & \(y=6\) & \(y=7\) & \(y=8\)
& \(\max_y \phi_{ab}(y)\)
\\
\midrule
\((1,5)\)
& 288 & 216 & 288 & 276 & 174 & 72 & 144 & 216 & 288
& 288
\\
\((1,6)\)
& 240 & 180 & 240 & 240 & 210 & 180 & 120 & 180 & 240
& 240
\\
\((2,5)\)
& 112 & 84 & 56 & 112 & 70 & 28 & 56 & 84 & 112
& 112
\\
\((2,6)\)
& 80 & 60 & 40 & 80 & 80 & 80 & 40 & 60 & 80
& 80
\\
\((2,7)\)
& 240 & 180 & 120 & 180 & 210 & 240 & 240 & 180 & 240
& 240
\\
\((3,5)\)
& 188 & 141 & 94 & 47 & 188 & 47 & 94 & 141 & 188
& 188
\\
\((3,6)\)
& 112 & 84 & 56 & 28 & 70 & 112 & 56 & 84 & 112
& 112
\\
\((3,7)\)
& 288 & 216 & 144 & 72 & 174 & 276 & 288 & 216 & 288
& 288
\\
\((4,5)\)
& 564 & 423 & 282 & 141 & 0 & 141 & 282 & 423 & 564
& 564
\\
\bottomrule
\end{tabular}

\end{center}

Summing the maxima in the last column gives
\begin{align}
    \Phi
    \le
    288+240+112+80+240+188+112+288+564
    =
    2112.
    \label{eq:manual-lower-phi-upper}
\end{align}

Combining \eqref{eq:manual-lower-phi-lower} and
\eqref{eq:manual-lower-phi-upper}, we get \(2520\beta\le 2112\). Hence
\[
    \beta\le \frac{2112}{2520}=\frac{88}{105} \Longleftrightarrow \alpha\ge \frac{105}{88}.
\]
This completes the proof.
\end{proof}

\section{Minimum Utility}

In this section, we analyze the minimum utility for both upper and lower bounds.

\subsection{Upper Bound}

Chan et al.~\cite{chan2025obnoxious} studied the mechanism that selects a
facility uniformly at random from the whole segment. They showed that this
mechanism gives an \(O(\sqrt n)\)-approximation for minimum utility. A more
careful calculation gives an approximation ratio of \(\sqrt{2n}+O(1)\) in their proof and analysis.
In this subsection, we improve the leading constant by adding a
small amount of probability mass to the two endpoints. The resulting
mechanism achieves an approximation ratio of \(\sqrt n+O(1)\).
For a parameter \(q\in[0,\frac12]\), define the following mechanism.

\begin{mechanism}\label{mec:mu-endpoint-uniform}
(Endpoint-Augmented Uniform Mechanism).
The mechanism outputs \(0\) with probability \(q\), outputs \(1\) with
probability \(q\), and with the remaining probability \(1-2q\), outputs a
point drawn uniformly at random from \([0,1]\).
\end{mechanism}

The mechanism is independent of the reported profile, and hence is
universally strategyproof. We set \(c=1-\frac{1}{\sqrt2}\) and \(q_n=\frac{c}{\sqrt n}\).
Since \(c<\frac12\), we have \(q_n\in[0,\frac12]\) for every \(n\ge1\).

\begin{theorem}\label{thm:mu-upper}
For every \(n\ge2\), the Endpoint-Augmented Uniform Mechanism with
\(q=q_n\) is universally strategyproof and achieves an approximation ratio
at most \(\sqrt n+O(1)\)
for minimum utility.
\end{theorem}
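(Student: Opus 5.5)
The plan is to handle strategyproofness in one line and then reduce the approximation ratio to two one-variable optimizations. Strategyproofness is immediate: the output distribution does not depend on the reported profile, so no report can change any agent's utility, and the mechanism is a mixture of constant (hence deterministic strategyproof) mechanisms.

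\textbf{Setup.} Sort the profile as $x_{(1)}\le\cdots\le x_{(n)}$. Let $L=x_{(1)}$ and $R=1-x_{(n)}$, and let $g_1,\dots,g_{n-1}$ be the interior gaps, so that $L+R+\sum_k g_k=1$. The optimum is
\[
\operatorname{OPT}_{\operatorname{MU}}(\mathbf{x})=h:=\max\{L,R,\max_k g_k/2\},
\]
since the best point is an endpoint or the midpoint of a gap. For a point $y$ the minimum utility is the distance to the nearest agent. Integrating it over the uniform part gives exactly
\[
\int_0^1 \operatorname{MU}(\mathbf{x},y)\,dy=\frac{L^2}{2}+\frac{R^2}{2}+\sum_k\frac{g_k^2}{4}.
\]
The endpoint atoms contribute $q(L+R)$, so
\[
\operatorname{MU}(\mathbf{x},f(\mathbf{x}))=q(L+R)+(1-2q)\Big(\frac{L^2+R^2}{2}+\frac14\sum_k g_k^2\Big).
\]
The key lower-bound tool is Cauchy--Schwarz on the gaps that are not the optimal one. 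Those gaps have total length at least $1-2h$ minus the optimal gap's own length, and there are at most $n-1$ of them, so their squares sum to at least (roughly) $(1-O(h))^2/n$.

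\textbf{Endpoint case ($h=L$, or symmetrically $h=R$).} Keep only the $qL$ and $L^2/2$ terms and the Cauchy bound on the interior gaps. This gives
\[
\operatorname{MU}\ge qL+(1-2q)\Big(\frac{L^2}{2}+\frac{(1-2L)^2}{4(n-1)}\Big).
\]
Substitute $L=t/\sqrt n$ and $q=c/\sqrt n$. The ratio $h/\operatorname{MU}$ then becomes $\sqrt n\cdot t/(ct+t^2/2+1/4)$ up to lower-order corrections. The function $t\mapsto t/(ct+t^2/2+1/4)$ is maximized at $t=1/\sqrt2$, with value $1/(c+1/\sqrt2)=1$ exactly because $c=1-1/\sqrt2$. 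This is where the constant $c$ is used.

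\textbf{Interior case ($h=G/2$ for the largest gap $G$).} Here the endpoint mass is wasted, but the gap $G$ itself contributes $G^2/4=h^2$. The remaining mass $1-2h$ is spread over at most $n-1$ gaps and ends, each with at least the weight $1/4$ per squared length. This yields
\[
\operatorname{MU}\ge(1-2q)\Big(h^2+\frac{(1-2h)^2}{4n}\Big).
\]
With $h=t/\sqrt n$, the ratio is $\sqrt n\cdot t/(t^2+1/4)$, which is at most $\sqrt n$, attained at $t=1/2$. The ends receive weight $1/2$ rather than $1/4$, which only helps.

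\textbf{Main obstacle.} The main difficulty is bookkeeping the $O(1)$ error terms uniformly in $t$. These come from the factor $1-2q=1-O(1/\sqrt n)$, from $n-1$ versus $n$, and from the cross terms $(1-2h)^2=1-O(h)$. I would handle them by splitting into two regimes.

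If $t$ is large, say $h\ge n^{-1/4}$, the dominant term $h^2$ (or $qL+L^2/2$) already gives a ratio of $O(n^{1/4})$, well below $\sqrt n$. Bounded ratio is not guaranteed there, so this regime only needs the crude bound. If $t\le n^{1/4}$, then expanding $(1-2t/\sqrt n)^2$ gives a relative error of $O(t/\sqrt n)$. This multiplies a ratio of order $\sqrt n\cdot\min(1,1/t)$, so the additive error is $O(1)$.

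Combining the two cases then gives a ratio of at most $\sqrt n+O(1)$ for all $n\ge2$.
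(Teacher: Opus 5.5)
Your proof is correct and is essentially the paper's argument. It uses the same split according to whether the optimum comes from an interior gap or a boundary gap, the same weighted Cauchy--Schwarz bound on the remaining gaps, and the same identity $c+\frac{1}{\sqrt2}=1$ at the maximizer $L\approx 1/\sqrt{2n}$. The paper differs only in bookkeeping: it keeps the sharper end weights $\frac12$ (denominators $4(n-1)$ and $4n-2$) and maximizes the exact one-variable ratios via $B(t)-tB'(t)=0$, which gives explicit bounds such as $(\sqrt n+1)/(1-2q)$ for every $n\ge2$ and avoids your two-regime error analysis. One small slip: in your interior case the leftover pieces number $n$, not $n-1$, but your displayed bound with $4n$ is correct.
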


\begin{proof}
The mechanism is universally strategyproof because its distribution over
facility locations is independent of the reported profile. It remains to
prove the approximation ratio.

Fix a profile \(\mathbf{x}\), and write the agent locations in
nondecreasing order as \(x_1\le x_2\le\cdots\le x_n\).
Define the gaps
\[
    g_0=x_1,\qquad
    g_i=x_{i+1}-x_i\quad (1\le i\le n-1),\qquad
    g_n=1-x_n.
\]
Thus \(\sum_{i=0}^n g_i=1\). Let \(\Delta=\operatorname{OPT}_{\operatorname{MU}}(\mathbf{x})\)
be the optimal minimum utility. Since the best location is either an
endpoint of the segment or the midpoint of an empty interval between two
consecutive agents, we have
\[
    \Delta
    =
    \max\left\{
        g_0,\,
        g_n,\,
        \frac{g_1}{2},\ldots,\frac{g_{n-1}}{2}
    \right\}.
\]

Let \(U(\mathbf{x})\) be the expected minimum utility obtained by the
uniform distribution on \([0,1]\). On a boundary gap of length \(g_0\) or
\(g_n\), the nearest-agent distance forms a triangle of area
\(\frac12 g_0^2\) or \(\frac12 g_n^2\). On an internal gap of length
\(g_i\), the nearest-agent distance forms two symmetric triangles with
total area \(\frac14 g_i^2\). Therefore
\[
    U(\mathbf{x})
    =
    \frac{g_0^2}{2}
    +
    \frac{g_n^2}{2}
    +
    \sum_{i=1}^{n-1}\frac{g_i^2}{4}.
\]
Under the Endpoint-Augmented Uniform Mechanism, the expected minimum
utility is
\[
    \operatorname{ALG}(\mathbf{x})
    =
    q(g_0+g_n)+(1-2q)U(\mathbf{x}).
\]

We split the analysis according to whether the optimal empty interval is
an internal gap or a boundary gap.

First suppose that \(\Delta\) is attained by an internal gap. Then
\(g_i=2\Delta\) for some \(i\in\{1,\ldots,n-1\}\). The contribution of
this gap to \(U(\mathbf{x})\) is \(\Delta^2\). The remaining gaps have
total length \(1-2\Delta\). Ignoring the upper-bound constraints on these
remaining gaps can only decrease the minimum possible value of
\(U(\mathbf{x})\). Hence, by Cauchy's inequality,
\[
    U(\mathbf{x})
    \ge
    \Delta^2
    +
    \frac{(1-2\Delta)^2}{4(n-1)}.
\]
Indeed, among the remaining gaps, there are two boundary gaps with
coefficient \(\frac12\) and \(n-2\) internal gaps with coefficient
\(\frac14\), and
\[
    \frac{1}{1/2}+\frac{1}{1/2}+(n-2)\frac{1}{1/4}
    =
    4(n-1).
\]
Therefore
\[
    \operatorname{ALG}(\mathbf{x})
    \ge
    (1-2q)\left(
        \Delta^2+\frac{(1-2\Delta)^2}{4(n-1)}
    \right).
\]
For \(t>0\), define
\[
    F_I(t)
    =
    \frac{
        t
    }{
        t^2+\frac{(1-2t)^2}{4(n-1)}
    }.
\]
A direct calculation gives
\[
    F_I(t)
    \le
    \sqrt n+1.
\]
To see this, let
\[
    B_I(t)=t^2+\frac{(1-2t)^2}{4(n-1)}.
\]
Then
\[
    B_I(t)-tB_I'(t)
    =
    \frac{1}{4(n-1)}-\frac{n}{n-1}t^2,
\]
so the maximum is attained at \(t=\frac{1}{2\sqrt n}\), where
\(F_I(t)=\sqrt n+1\). Hence, in the internal-gap case,
\[
    \frac{\Delta}{\operatorname{ALG}(\mathbf{x})}
    \le
    \frac{\sqrt n+1}{1-2q}.
\]

Now suppose that \(\Delta\) is attained by a boundary gap. By symmetry,
assume without loss of generality that \(g_0=\Delta\). Then the endpoint
part of the mechanism contributes at least \(q\Delta\). Moreover, the
uniform part satisfies
\[
    U(\mathbf{x})
    \ge
    \frac{\Delta^2}{2}
    +
    \frac{(1-\Delta)^2}{4n-2}.
\]
Here the first term is the contribution of the boundary gap \(g_0\), and
the second term follows again from Cauchy's inequality applied to the
remaining gaps, whose total length is \(1-\Delta\). Among these remaining
gaps, there is one boundary gap with coefficient \(\frac12\) and
\(n-1\) internal gaps with coefficient \(\frac14\), so
\[
    \frac{1}{1/2}+(n-1)\frac{1}{1/4}=4n-2.
\]
Therefore
\[
    \operatorname{ALG}(\mathbf{x})
    \ge
    q\Delta
    +
    (1-2q)\left(
        \frac{\Delta^2}{2}
        +
        \frac{(1-\Delta)^2}{4n-2}
    \right).
\]
For \(t>0\), define
\[
    F_B(t)
    =
    \frac{
        t
    }{
        qt+(1-2q)\left(
            \frac{t^2}{2}+\frac{(1-t)^2}{4n-2}
        \right)
    }.
\]
Let \(r=1-2q\), and write the denominator as
\[
    qt+r\left(
        \frac{t^2}{2}+\frac{(1-t)^2}{4n-2}
    \right).
\]
The maximum of \(F_B(t)\) is attained when
\[
    t=\frac{1}{\sqrt{2n}}.
\]
Indeed, if \(B_B(t)\) denotes the denominator, then
\[
    B_B(t)-tB_B'(t)
    =
    r\left(
        \frac{1}{4n-2}
        -
        \frac{n}{2n-1}t^2
    \right),
\]
which vanishes exactly at \(t=\frac{1}{\sqrt{2n}}\). Substituting this
value gives
\[
    F_B(t)
    \le
    \left(
        q
        +
        (1-2q)\frac{\sqrt{2n}-1}{2n-1}
    \right)^{-1}.
\]

We now set \(q=q_n=\frac{c}{\sqrt n}\), where
\(c=1-\frac{1}{\sqrt2}\). In the internal-gap case, we get
\[
    \frac{\Delta}{\operatorname{ALG}(\mathbf{x})}
    \le
    \frac{\sqrt n+1}{1-\frac{2c}{\sqrt n}}
    =
    \sqrt n+O(1).
\]
In the boundary-gap case, we get
\[
    \frac{\Delta}{\operatorname{ALG}(\mathbf{x})}
    \le
    \left(
        \frac{c}{\sqrt n}
        +
        \left(1-\frac{2c}{\sqrt n}\right)
        \frac{\sqrt{2n}-1}{2n-1}
    \right)^{-1}.
\]
Since
\[
    \frac{\sqrt{2n}-1}{2n-1}
    =
    \frac{1}{\sqrt{2n}}+O\left(\frac1n\right),
\]
and since \(c+\frac{1}{\sqrt2}=1\), the denominator above is
\[
    \frac{1}{\sqrt n}+O\left(\frac1n\right).
\]
Thus the boundary-gap case also gives
\[
    \frac{\Delta}{\operatorname{ALG}(\mathbf{x})}
    \le
    \sqrt n+O(1).
\]
Combining the two cases proves the theorem.
\end{proof}

\subsection{Lower Bound}\label{subsec:mu-lower}

We next show that, for the minimum-utility objective and sufficiently large numbers of agents, no randomized strategyproof mechanism can achieve an approximation ratio strictly smaller than \(2\), which improves the previous value of $\frac{3}{2}$ of \cite{feigenbaum2020strategic}.

\begin{theorem}\label{thm:mu-lower}
    For every \(\rho<2\), there exists \(n_0\) such that, for every \(n\ge n_0\), no randomized strategyproof mechanism for \(n\) agents is a \(\rho\)-approximation for the minimum utility. Consequently, the asymptotic approximation ratio of randomized strategyproof mechanisms for minimum utility is at least \(2\).
\end{theorem}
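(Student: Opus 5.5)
The plan is to build, for large \(n\), a family of profiles in which all but a few agents densely fill most of \([0,1]\). On such profiles essentially all of the minimum utility must come from the facility landing in one or two designated empty intervals. Strategyproofness will then be invoked by a single "boundary" agent who moves into the gap, and a linear combination of the SP inequality with the approximation inequalities should force \(\rho\ge 2-o(1)\). The role of \(n_0\) is only to make the filling dense enough that every point outside the designated gaps has minimum utility at most some \(\varepsilon\to0\). So I first fix a grid of spacing \(\varepsilon\) occupied by the "filler" agents and treat contributions from outside the gaps as \(O(\varepsilon)\).

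\textbf{Profiles.} Profile \(P_1\) has an empty gap \((a,b)\) of length \(2\Delta\), with agents at \(a\) and \(b\) and filler elsewhere, so \(\operatorname{OPT}_{\operatorname{MU}}(P_1)=\Delta\), attained at \(m=\frac{a+b}{2}\). Profile \(P_2\) is the same except that the agent at \(b\) reports \(m\). Since filler sits just beyond \(b\), \(P_2\) has two gaps \((a,m)\) and \((m,b)\) (up to \(\varepsilon\)), and \(\operatorname{OPT}_{\operatorname{MU}}(P_2)=\Delta/2\).

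\textbf{Constraints from \(\rho\).} A \(\rho\)-approximation on \(P_1\) gives \(\mathbb{E}[\min(Y-a,b-Y)]\ge \Delta/\rho-O(\varepsilon)\). For \(Y\) in the gap, this minimum equals \(\Delta-|Y-m|\), so \(\mathbb{E}_{P_1}|Y-m|\le \Delta(1-1/\rho)+O(\varepsilon)\). In other words, the mechanism concentrates near \(m\). On \(P_2\), the ratio forces mass near \(m\pm\Delta/2\).

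\textbf{Strategyproofness.} Use the SP inequality for the agent whose true location is \(b\): \(\mathbb{E}_{P_1}|b-Y|\ge \mathbb{E}_{P_2}|b-Y|\). If needed, add the symmetric deviation of the agent at \(a\) and average the two, or chain further profiles \(P_3,P_4,\dots\) in which the gap is halved repeatedly. In \(P_1\) we have \(|b-Y|=\Delta+(m-Y)\), which after symmetrization averages to about \(\Delta\). In \(P_2\), mass near \(m+\Delta/2\) is at distance about \(\Delta/2\) from \(b\), which is harmless. Mass near \(m-\Delta/2\), however, is at distance about \(3\Delta/2\) from \(b\). So the symmetry of the construction together with the \(\rho\)-requirement should force \(\mathbb{E}_{P_2}|b-Y|\) above what \(P_1\) can offer unless \(\rho\ge2\).

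\textbf{Main obstacle.} The delicate step is choosing which deviations and profiles to combine, and with what weights, so that the inequalities close exactly at \(2\). I do not expect a single two-profile comparison to reach \(2\); the earlier \(\frac32\) bound presumably came from such a pair. My first attempt would be a chain \(P_1,\dots,P_k\) in which each step inserts one new agent at the midpoint of the current largest gap, at the cost of a few more agents. I would then telescope the SP inequalities along the chain, treating the unknown mass splits as variables in a small LP in the spirit of the certificate in Theorem~\ref{thm:su-lower}. The hope is to show the LP value tends to \(1/2\) as \(k\to\infty\) and \(\varepsilon\to0\). Taking \(n_0\) large enough to support both the filler and the \(k\) steps then gives the claim for every \(\rho<2\).
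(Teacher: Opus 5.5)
\textbf{There is a genuine gap.} It is exactly the step you defer: nothing in the proposal produces the constant $2$, and the local construction you describe cannot.

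First, the inequality $\mathbb{E}_{P_1}|Y-m|\le\Delta(1-1/\rho)+O(\varepsilon)$ is wrong. The identity $\min(Y-a,b-Y)=\Delta-|Y-m|$ holds only for $Y$ in the gap, and the $\rho$-approximation on $P_1$ yields only $\Pr_{P_1}[Y\in(a,b)]\ge 1/\rho-O(\varepsilon/\Delta)$. The remaining mass is close to $\frac12$ when $\rho$ is close to $2$, is completely unconstrained, and may sit at distance $D=\Theta(1)$ from the gap.

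Second, the pair $(P_1,P_2)$ carries no tension. Putting all mass at $m$ on $P_1$ and all mass at $m+\Delta/2$ on $P_2$ is (near-)optimal on both profiles. It gives the agent at $b$ utility $\Delta$ truthfully versus $\Delta/2$ after deviating. The $\rho$-requirement on $P_2$ never forces mass near $m-\Delta/2$.

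Third, and more fundamentally, look at any telescoped chain of one-directional SP inequalities whose deviating agents sit within $O(\Delta)$ of the gap, with $\Delta\ll D$. Suppose profile $P_i$ parks a fraction $f_i$ of its mass far away. Then $u(P_i)-u(P_{i+1})$ for the deviating agent gains about $(f_i-f_{i+1})D$, while every deviation gain is only $O(\Delta)$. Choosing $f_i$ decreasing in steps of order $\Delta/D$ satisfies all chain constraints with ratio $1+O(k\Delta/D)$. So the LP you envisage stays near $1$ rather than reaching $\frac12$.

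\textbf{The missing idea} is to take the deviating agents far from the gap, at an endpoint of the segment, where utility is linear in the outcome. An agent at $0$ has utility $\mathbb{E}[Y]$. So SP implies that moving agents away from $0$, one at a time, can only decrease the mean $\bar y$ (Lemma~\ref{lem:mu-mean-monotonicity}), and symmetrically for agents at $1$. Stray mass then cannot be exploited: it enters only through the trivial bounds $0\le Y\le 1$.

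The paper's construction runs as follows.
\begin{itemize}
\item Start from $\mathbf{x}^0$, with $k$ agents at each endpoint and an $\eta$-dense grid on $[\delta,1-\delta]$.
\item Move $k-1$ agents from $0$ to fill $[0,\delta]$. This gives $\mathbf{x}^R$, whose unique large gap is $[1-\delta,1]$ with $\operatorname{OPT}_{\operatorname{MU}}(\mathbf{x}^R)=\delta/2$.
\item For $\rho<2$, the approximation forces mass at least $\lambda=(1/\rho-\varepsilon)/(1-\varepsilon)>\frac12$ into that gap. Hence $\bar y(\mathbf{x}^0)\ge\bar y(\mathbf{x}^R)\ge\lambda(1-\delta)>\frac12$.
\item The mirror profile $\mathbf{x}^L$, reached by moving agents from $1$, gives $\bar y(\mathbf{x}^0)<\frac12$, a contradiction.
\end{itemize}
Three profiles and two monotone chains suffice, with no LP. The constant $2$ is precisely the threshold at which the gap must receive a majority of the mass.
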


We first record a simple monotonicity consequence of SP. For a randomized mechanism \(f\), let
\(\bar y(\mathbf{x})
    =
    \mathbb{E}_{Y\sim f(\mathbf{x})}[Y]\)
denote the expected facility location on profile \(\mathbf{x}\).

\begin{lemma}\label{lem:mu-mean-monotonicity}
    Let \(f\) be a randomized strategyproof mechanism. Suppose that a profile \(\mathbf{x}'\) is obtained from a profile \(\mathbf{x}\) by changing the reports of some agents located at \(0\), one by one, while keeping all other reports fixed. Then \(\bar y(\mathbf{x})\ge \bar y(\mathbf{x}')\). Similarly, if \(\mathbf{x}'\) is obtained from \(\mathbf{x}\) by changing the reports of some agents located at \(1\), one by one, then \(\bar y(\mathbf{x})\le \bar y(\mathbf{x}')\).
\end{lemma}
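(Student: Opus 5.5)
The plan is to reduce to a single deviation and then chain. Suppose \(\mathbf{x}'\) differs from \(\mathbf{x}\) only in agent \(i\)'s report: \(x_i=0\) and \(x_i'\in[0,1]\) is arbitrary. Strategyproofness applied to agent \(i\) at true location \(0\) gives
\[
    \mathbb{E}_{Y\sim f(\mathbf{x})}\bigl[|0-Y|\bigr]\ \ge\ \mathbb{E}_{Y\sim f(\mathbf{x}')}\bigl[|0-Y|\bigr].
\]
Since every outcome lies in \([0,1]\), we have \(|0-Y|=Y\), so this inequality reads exactly \(\bar y(\mathbf{x})\ge\bar y(\mathbf{x}')\). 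The key observation is that the utility of an agent at an endpoint is an affine function of the facility location. Hence the SP constraint for such an agent constrains only the mean outcome.

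For the general statement, I take the sequence of profiles \(\mathbf{x}=\mathbf{x}^{(0)},\mathbf{x}^{(1)},\ldots,\mathbf{x}^{(k)}=\mathbf{x}'\), where \(\mathbf{x}^{(m)}\) is obtained from \(\mathbf{x}^{(m-1)}\) by changing the report of the \(m\)-th chosen agent from \(0\) to its new value. At each step, the agent being changed still reports \(0\) in \(\mathbf{x}^{(m-1)}\), because earlier steps touched only other agents. So the single-deviation argument applies with \(\mathbf{x}^{(m-1)}\) as the truthful profile, where that agent's true location is \(0\). This gives \(\bar y(\mathbf{x}^{(m-1)})\ge \bar y(\mathbf{x}^{(m)})\), and telescoping yields \(\bar y(\mathbf{x})\ge\bar y(\mathbf{x}')\).

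The case of agents at \(1\) is symmetric. There \(|1-Y|=1-Y\), so SP gives \(1-\bar y(\mathbf{x}^{(m-1)})\ge 1-\bar y(\mathbf{x}^{(m)})\), which is the reverse inequality; chaining again gives \(\bar y(\mathbf{x})\le\bar y(\mathbf{x}')\).

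The only point needing care, and the closest thing to an obstacle, is the validity of each intermediate SP application. SP is quantified over all profiles, so the intermediate profiles, which are mixed between the original and modified reports, are legitimate truthful profiles. Moreover, the deviating agent's location in each of them is genuinely the endpoint. No other difficulty is expected.
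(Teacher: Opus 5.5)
Your proposal is correct and follows the paper's proof essentially verbatim. Both apply strategyproofness to a single deviation of an agent at an endpoint, use \(|0-Y|=Y\) (resp.\ \(|1-Y|=1-Y\)) on \([0,1]\), and then chain the steps; your explicit check that the changing agent still reports the endpoint in each intermediate profile is a useful bit of care that the paper leaves implicit in ``follows by iteration.''
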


\begin{proof}
    It is enough to prove the statement for one changed report, since the general statement follows by iteration. Suppose first that \(\mathbf{x}'=(x_i',\mathbf{x}_{-i})\) and \(x_i=0\). By SP, agent \(i\) cannot increase her expected utility by reporting \(x_i'\) instead of \(0\). Since the facility always lies in \([0,1]\), this gives
    \[
        \mathbb{E}_{Y\sim f(\mathbf{x})}[Y]
        =
        \mathbb{E}_{Y\sim f(\mathbf{x})}[|0-Y|]
        \ge
        \mathbb{E}_{Y\sim f(\mathbf{x}')}[|0-Y|]
        =
        \mathbb{E}_{Y\sim f(\mathbf{x}')}[Y].
    \]
    Hence \(\bar y(\mathbf{x})\ge \bar y(\mathbf{x}')\). The case \(x_i=1\) is analogous. In that case,
    \[
        \mathbb{E}_{Y\sim f(\mathbf{x})}[1-Y]
        =
        \mathbb{E}_{Y\sim f(\mathbf{x})}[|1-Y|]
        \ge
        \mathbb{E}_{Y\sim f(\mathbf{x}')}[|1-Y|]
        =
        \mathbb{E}_{Y\sim f(\mathbf{x}')}[1-Y],
    \]
    and therefore \(\bar y(\mathbf{x})\le \bar y(\mathbf{x}')\).
\end{proof}

\begin{proof}[Proof of Theorem~\ref{thm:mu-lower}]
    It suffices to consider \(1\le\rho<2\), since approximation ratios are at least \(1\). Fix such a \(\rho\), and choose \(\varepsilon\in(0,1)\) sufficiently small so that
    \[
        \lambda
        :=
        \frac{1/\rho-\varepsilon}{1-\varepsilon}
        >
        \frac12 .
    \]
    This is possible because \(1/\rho>1/2\). Next choose \(\delta\in(0,1/2)\) sufficiently small so that
    \[
        \lambda(1-\delta)>\frac12 ,
    \]
    and set \(\eta=\varepsilon\delta\).

    We construct three profiles \(\mathbf{x}^0\), \(\mathbf{x}^R\), and \(\mathbf{x}^L\) using finite grids whose consecutive points are at distance at most \(\eta\). Set
    \[
        m=\left\lceil\frac{1-2\delta}{\eta}\right\rceil,
        \qquad k=\left\lceil\frac{\delta}{\eta}\right\rceil,
        \qquad n_0=m+1+2k.
    \]
    Fix any \(n\ge n_0\). In \(\mathbf{x}^0\), place one agent at each of the \(m+1\) equally spaced grid points of \([\delta,1-\delta]\), including both endpoints, and place \(k\) agents at each of \(0\) and \(1\). Place all \(n-n_0\) extra agents at \(\delta\), and keep these extra agents fixed in all three profiles.

    To obtain \(\mathbf{x}^R\), move \(k-1\) of the agents at \(0\), one to each point \(j\delta/k\), \(j=1,\ldots,k-1\). Keep every other report fixed, including one agent at \(0\). The occupied locations in \([0,1-\delta]\) include both endpoints and have consecutive gaps at most \(\eta\). Since \(\eta<\delta\) and agents remain at \(1\), the interval \([1-\delta,1]\) is the unique largest empty interval, and
    \[
        \operatorname{OPT}_{\operatorname{MU}}(\mathbf{x}^R)
        =\frac{\delta}{2}.
    \]
    Similarly, obtain \(\mathbf{x}^L\) from \(\mathbf{x}^0\) by moving \(k-1\) agents from \(1\), one to each point \(1-j\delta/k\), \(j=1,\ldots,k-1\), and keeping every other report fixed. Now the occupied locations in \([\delta,1]\) include both endpoints and have consecutive gaps at most \(\eta\), while \([0,\delta]\) is the unique largest empty interval. Hence
    \[
        \operatorname{OPT}_{\operatorname{MU}}(\mathbf{x}^L)
        =\frac{\delta}{2}.
    \]

    Suppose, for contradiction, that there exists a randomized strategyproof mechanism \(f\) that is a \(\rho\)-approximation for the minimum-utility objective. We first consider profile \(\mathbf{x}^R\). Let \(Y^R\sim f(\mathbf{x}^R)\), and let
    \[
        q_R
        =
        \Pr[Y^R\in[1-\delta,1]].
    \]
    On \([1-\delta,1]\), the distance from any point to the nearest agent in \(\mathbf{x}^R\) is at most \(\delta/2\). On \([0,1-\delta]\), since both endpoints are occupied and consecutive occupied locations are at distance at most \(\eta\), the distance from any point to the nearest agent is at most \(\eta/2\). Therefore,
    \[
        \operatorname{MU}(\mathbf{x}^R,f(\mathbf{x}^R))
        \le
        q_R\cdot\frac{\delta}{2}
        +
        (1-q_R)\cdot\frac{\eta}{2}.
    \]
    On the other hand, since \(f\) is assumed to be a \(\rho\)-approximation,
    \[
        \operatorname{MU}(\mathbf{x}^R,f(\mathbf{x}^R))
        \ge
        \frac{\operatorname{OPT}_{\operatorname{MU}}(\mathbf{x}^R)}{\rho}
        =
        \frac{\delta}{2\rho}.
    \]
    Combining the two inequalities gives
    \[
        q_R\delta+(1-q_R)\eta
        \ge
        \frac{\delta}{\rho},
    \]
    and hence
    \[
        q_R
        \ge
        \frac{\delta/\rho-\eta}{\delta-\eta}
        =
        \frac{1/\rho-\varepsilon}{1-\varepsilon}
        =
        \lambda .
    \]
    Since every point in \([1-\delta,1]\) is at least \(1-\delta\), we obtain
    \[
        \bar y(\mathbf{x}^R)
        =
        \mathbb{E}[Y^R]
        \ge
        q_R(1-\delta)
        \ge
        \lambda(1-\delta)
        >
        \frac12 .
    \]
    Moreover, \(\mathbf{x}^R\) is obtained from \(\mathbf{x}^0\) by changing only reports of agents located at \(0\). By Lemma~\ref{lem:mu-mean-monotonicity},
    \[
        \bar y(\mathbf{x}^0)
        \ge
        \bar y(\mathbf{x}^R)
        >
        \frac12 .
    \]

    We now consider profile \(\mathbf{x}^L\). Let \(Y^L\sim f(\mathbf{x}^L)\), and let
    \[
        q_L
        =
        \Pr[Y^L\in[0,\delta]].
    \]
    By the same argument as above, since \([0,\delta]\) is the unique largest empty interval of \(\mathbf{x}^L\) and the rest of the segment has occupied endpoints and consecutive gaps at most \(\eta\), the \(\rho\)-approximation ratio implies
    \[
        q_L
        \ge
        \frac{\delta/\rho-\eta}{\delta-\eta}
        =
        \lambda .
    \]
    Since every point in \([0,\delta]\) is at most \(\delta\), while every point in \([0,1]\) is at most \(1\), we have
    \[
        \bar y(\mathbf{x}^L)
        =
        \mathbb{E}[Y^L]
        \le
        q_L\delta+(1-q_L)
        =
        1-q_L(1-\delta)
        \le
        1-\lambda(1-\delta)
        <
        \frac12 .
    \]
    Moreover, \(\mathbf{x}^L\) is obtained from \(\mathbf{x}^0\) by changing only reports of agents located at \(1\). By Lemma~\ref{lem:mu-mean-monotonicity},
    \[
        \bar y(\mathbf{x}^0)
        \le
        \bar y(\mathbf{x}^L)
        <
        \frac12 .
    \]

    We have derived both \(\bar y(\mathbf{x}^0)>1/2\) and \(\bar y(\mathbf{x}^0)<1/2\), a contradiction. Therefore no randomized strategyproof mechanism can be a \(\rho\)-approximation for any \(\rho<2\), provided that \(n\) is sufficiently large. Since \(\rho<2\) was arbitrary, the asymptotic lower bound is \(2\).
\end{proof}

\section{Conclusion}\label{sec:conclusion}

We studied randomized strategyproof mechanisms for obnoxious facility
location on a line segment under the social utility and minimum utility
objectives. For social utility, we improved both the best known upper
bound and lower bound for randomized strategyproof mechanisms. For minimum
utility, we improved the leading constant in the known \(O(\sqrt n)\)
upper bound and proved an asymptotic lower bound of \(2\). Several gaps
remain open. For social utility, it is open whether a sharper kernel
certificate can improve the approximation guarantee for the same
random-threshold dictator mechanism. For minimum utility, the true asymptotic order remains far
from settled: while we suspect that an \(\Omega(\sqrt n)\) lower bound may
hold, our current techniques only prove a constant lower bound. Closing
this gap, even by improving the lower bound beyond \(2\), seems to be a
promising direction for future work.

\paragraph*{AI Assistance Disclosure.}
The authors used ChatGPT to assist with proofreading, language polishing,
and checking selected numerical details. All mathematical results and
claims were independently verified by the authors, who take full
responsibility for the content of the paper.


\bibliographystyle{plain}
\bibliography{myreferences}

@article{alon2010strategyproof,
  title={Strategyproof approximation of the minimax on networks},
  author={Alon, Noga and Feldman, Michal and Procaccia, Ariel D and Tennenholtz, Moshe},
  journal={Mathematics of Operations Research},
  volume={35},
  number={3},
  pages={513--526},
  year={2010},
  publisher={INFORMS}
}

@inproceedings{alex2024,
  author = {Lam, Alexander and Aziz, Haris and Li, Bo and Ramezani, Fahimeh and Walsh, Toby},
  title = {Proportional fairness in obnoxious facility location},
  year = {2024},
  booktitle = {Proceedings of the 23rd International Conference on Autonomous Agents and Multiagent Systems},
  pages = {1075--1083}
}

@article{aziz2020capacity,
  title={The capacity constrained facility location problem},
  author={Aziz, Haris and Chan, Hau and Lee, Barton E and Parkes, David C},
  journal={Games and Economic Behavior},
  volume={124},
  pages={478--490},
  year={2020},
  publisher={Elsevier}
}

@article{cheng2013strategy,
  title={Strategy-proof approximation mechanisms for an obnoxious facility game on networks},
  author={Cheng, Yukun and Yu, Wei and Zhang, Guochuan},
  journal={Theoretical Computer Science},
  volume={497},
  pages={154--163},
  year={2013},
  publisher={Elsevier}
}

@article{d1979hotelling,
  title={On {Hotelling}'s ``stability in competition''},
  author={d'Aspremont, Claude and Gabszewicz, J Jaskold and Thisse, J-F},
  journal={Econometrica: Journal of the Econometric Society},
  pages={1145--1150},
  year={1979},
  publisher={JSTOR}
}

@article{feigenbaum2017approximately,
  title={Approximately optimal mechanisms for strategyproof facility location: minimizing {$L_p$} norm of costs},
  author={Feigenbaum, Itai and Sethuraman, Jay and Ye, Chun},
  journal={Mathematics of Operations Research},
  volume={42},
  number={2},
  pages={434--447},
  year={2017},
  publisher={INFORMS}
}

@article{GoelH23,
  author       = {Sumit Goel and
                  Wade Hann{-}Caruthers},
  title        = {Optimality of the coordinate-wise median mechanism for strategyproof facility location in two dimensions},
  journal      = {Social Choice and Welfare},
  volume       = {61},
  number       = {1},
  pages        = {11--34},
  year         = {2023}
}

@inproceedings{gravin2025approximation,
  title={Approximation guarantees of median mechanism in $\mathbb{R}^d$},
  author={Gravin, Nikolai and Jia, Jianhao},
  booktitle={Proceedings of the 57th Annual ACM Symposium on Theory of Computing (STOC)},
  pages={495--506},
  year={2025}
}

@inproceedings{Ibara2012characterize,
  title={Characterizing mechanisms in obnoxious facility game},
  author={Ibara, Ken and Nagamochi, Hiroshi},
  booktitle={International Conference on Combinatorial Optimization and Applications},
  pages={301--311},
  year={2012},
  organization={Springer}
}

@article{DBLP:journals/corr/abs-2212-09521,
  author       = {Gabriel Istrate and
                  Cosmin Bonchis},
  title        = {Mechanism design with predictions for obnoxious facility location},
  journal      = {CoRR},
  volume       = {abs/2212.09521},
  year         = {2022},
  eprinttype    = {arXiv}
}

@inproceedings{li2024strategyproof,
  title={Strategyproof mechanisms for group-fair obnoxious facility location problems},
  author={Li, Jiaqian and Li, Minming and Chan, Hau},
  booktitle={Proceedings of the AAAI Conference on Artificial Intelligence},
  volume={38},
  pages={9832--9839},
  year={2024}
}

@inproceedings{lin2020nearly,
  title={Nearly complete characterization of 2-agent deterministic strategyproof mechanisms for single facility location in $L_p$ space},
  author={Lin, Jianan},
  booktitle={International Conference on Combinatorial Optimization and Applications},
  pages={411--425},
  year={2020},
  organization={Springer}
}

@inproceedings{lu2009tighter,
  title={Tighter bounds for facility games},
  author={Lu, Pinyan and Wang, Yajun and Zhou, Yuan},
  booktitle={Proceedings of the 5th International Workshop on Internet and Network Economics (WINE)},
  pages={137--148},
  year={2009},
  organization={Springer}
}

@inproceedings{lu10mechanism,
  author = {Lu, Pinyan and Sun, Xiaorui and Wang, Yajun and Zhu, Zeyuan Allen},
  title = {Asymptotically optimal strategy-proof mechanisms for two-facility games},
  year = {2010},
  booktitle = {Proceedings of the 11th ACM Conference on Electronic Commerce (EC)},
  pages = {315--324}
}

@inproceedings{DBLP:conf/sagt/Meir19,
  author       = {Reshef Meir},
  title        = {Strategyproof facility location for three agents on a circle},
  booktitle    = {Algorithmic Game Theory - 12th International Symposium (SAGT)
                 },
  series       = {Lecture Notes in Computer Science},
  volume       = {11801},
  pages        = {18--33},
  publisher    = {Springer},
  year         = {2019}
}

@article{moulin1980strategy,
  title={On strategy-proofness and single peakedness},
  author={Moulin, Herv{\'e}},
  journal={Public Choice},
  volume={35},
  number={4},
  pages={437--455},
  year={1980},
  publisher={Springer}
}

@article{oomine2016characterizing,
  title={Characterizing output locations of {GSP} mechanisms to obnoxious facility game in trees},
  author={Oomine, Morito and Nagamochi, Hiroshi},
  journal={IEICE TRANSACTIONS on Information and Systems},
  volume={E99-D},
  number={3},
  pages={615--623},
  year={2016},
  publisher={The Institute of Electronics, Information and Communication Engineers}
}

@article{procaccia2013approximate,
  title={Approximate mechanism design without money},
  author={Procaccia, Ariel D and Tennenholtz, Moshe},
  journal={ACM Transactions on Economics and Computation (TEAC)},
  volume={1},
  number={4},
  pages={1--26},
  year={2013},
  publisher={ACM New York, NY, USA}
}

@inproceedings{tang2020characterization,
  title={Characterization of group-strategyproof mechanisms for facility location in strictly convex space},
  author={Tang, Pingzhong and Yu, Dingli and Zhao, Shengyu},
  booktitle={Proceedings of the 21st ACM Conference on Economics and Computation (EC)},
  pages={133--157},
  year={2020}
}

@inproceedings{walsh24utility,
  author    = {Toby Walsh},
  title     = {Approximate mechanism design for facility location with multiple objectives},
  booktitle = {{ECAI} 2024 - 27th European Conference on Artificial Intelligence},
  pages     = {3533-3540},
  year      = {2024}
}

@inproceedings{ye2015strategy,
  title={Strategy-proof mechanism for obnoxious facility location on a line},
  author={Ye, Deshi and Mei, Lili and Zhang, Yong},
  booktitle={International Computing and Combinatorics Conference},
  pages={45--56},
  year={2015},
  organization={Springer}
}

@inproceedings{chan2025obnoxious,
  title={Obnoxious Facility Location Problems: Strategyproof Mechanisms Optimizing $L_p$-Aggregated Utilities and Costs},
  author={Chan, Hau and Lin, Jianan and Wang, Chenhao},
  booktitle={Proceedings of the 2026 International Conference on Autonomous Agents and Multiagent Systems (AAMAS)},
  pages={496--504},
  year={2026}
}

@article{barak2026facility,
  title={Facility Location Mechanism Design--Breaking The Deterministic Barrier},
  author={Barak, Zohar},
  journal={arXiv preprint arXiv:2605.24750},
  year={2026},
  note={To appear in EC 26}
}

@article{feigenbaum2020strategic,
  title={Strategic facility location problems with linear single-dipped and single-peaked preferences},
  author={Feigenbaum, Itai and Li, Minming and Sethuraman, Jay and Wang, Fangzhou and Zou, Shaokun},
  journal={Autonomous Agents and Multi-Agent Systems},
  volume={34},
  number={2},
  pages={49},
  year={2020},
  publisher={Springer}
}

@article{chan2026strategyproof,
  title={Strategyproof Mechanisms for Euclidean Facility Location Problems under $L_p$-norm Social Cost},
  author={Chan, Hau and Lin, Jianan and Wang, Chenhao},
  journal={arXiv preprint arXiv:2606.08621},
  year={2026}
}

@article{chan2026randomized,
  title={Randomized Strategyproof Facility Location: Two Facilities and Beyond},
  author={Chan, Hau and Lin, Jianan and Wang, Chenhao},
  journal={arXiv preprint arXiv:2608.22484},
  year={2026}
}

@inproceedings{rogowski2025improved,
  title={Improved approximation ratio for strategyproof facility location on a cycle},
  author={Rogowski, Krzysztof and Dziubi{\'n}ski, Marcin},
  booktitle={Proceedings of the Thirty-Fourth International Joint Conference on Artificial Intelligence},
  pages={4032--4039},
  year={2025}
}

\appendix

\section{Quadratic Certificate for the Social Utility Upper Bound}
\label{app:upper-kernel}

In this appendix we provide the details for the quadratic certificate used
in Lemma~\ref{lem:quadratic-kernel-lower-bound}. Recall that
\[
    D(x,z)=K_{93/100}^s(x,z)-P(x,z),
\]
where
\[
    P(x,z)=
    \frac{7641}{10000}
    -\frac{2741}{4000}(x+z)
    +\frac{6869}{5000}xz.
\]
By symmetry, it suffices to consider \(0\le x\le z\le1\). We use the
regions \(U_{AA},U_{AB},U_{CA},U_{CB}\) and \(V_L,V_X,V_R\) defined in
\eqref{eq:uniform-regions} and \eqref{eq:atom-regions}. On each nonempty
intersection, \(D(x,z)\) is a quadratic polynomial.

For each nonempty intersection \(S\), let \(Q_S\) denote the polynomial
that agrees with \(10000D\) on \(S\), extended to its closure.
Table~\ref{tab:su-upper-certificate} gives \(Q_S\), a minimizer over \(\overline S\),
and \(m_S:=\frac{1}{10000}\min_{\overline S}Q_S\). This last value is a lower bound
for \(D\) on \(S\); the extension need not agree with \(10000D\)
at boundary points outside \(S\).

\begin{longtable}{c|p{0.46\textwidth}|c|c}
\caption{Quadratic certificate for the social-utility upper bound:
polynomial extensions, minimizers over the region closures, and lower bounds.}
\label{tab:su-upper-certificate}\\
\toprule
\(S\) & \(Q_S(x,z)\) & Minimizer in \(\overline S\) & \(m_S\)\\
\midrule
\endfirsthead
\toprule
\(S\) & \(Q_S(x,z)\) & Minimizer in \(\overline S\) & \(m_S\)\\
\midrule
\endhead

\(U_{AA}\cap V_L\)
&
\(-4650x^2+4862xz+\frac{3705}{2}x-4650z^2+\frac{3705}{2}z+34\)
&
\((0,0)\)
&
\(\frac{17}{5000}\)
\\[4pt]

\(U_{AA}\cap V_X\)
&
\(-4650x^2+4862xz+\frac{5105}{2}x-4650z^2+\frac{3705}{2}z-316\)
&
\(\left(\frac13,\frac23\right)\)
&
\(\frac{961}{36000}\)
\\[4pt]

\(U_{AA}\cap V_R\)
&
\(-4650x^2+4862xz+\frac{5105}{2}x-4650z^2+\frac{5105}{2}z-666\)
&
\((1,1)\)
&
\(\frac{1}{10000}\)
\\[4pt]

\(U_{AB}\cap V_L\)
&
\(4650x^2-4438xz+\frac{3705}{2}x-2325z^2+\frac{3705}{2}z+34\)
&
\((0,0)\)
&
\(\frac{17}{5000}\)
\\[4pt]

\(U_{AB}\cap V_X\)
&
\(4650x^2-4438xz+\frac{5105}{2}x-2325z^2+\frac{3705}{2}z-316\)
&
\(\left(0,\frac12\right)\)
&
\(\frac{29}{10000}\)
\\[4pt]

\(U_{CA}\cap V_X\)
&
\(-2325x^2-4438xz+\frac{14405}{2}x+4650z^2-\frac{14895}{2}z+2009\)
&
\(\left(\frac12,1\right)\)
&
\(\frac{1}{800}\)
\\[4pt]

\(U_{CA}\cap V_R\)
&
\(-2325x^2-4438xz+\frac{14405}{2}x+4650z^2-\frac{13495}{2}z+1659\)
&
\((1,1)\)
&
\(\frac{1}{10000}\)
\\[4pt]

\(U_{CB}\cap V_L\)
&
\(6975x^2-13738xz+\frac{13005}{2}x+6975z^2-\frac{14895}{2}z+2359\)
&
\(\left(0,\frac12\right)\)
&
\(\frac{379}{10000}\)
\\[4pt]

\(U_{CB}\cap V_X\)
&
\(6975x^2-13738xz+\frac{14405}{2}x+6975z^2-\frac{14895}{2}z+2009\)
&
\(\left(\frac{57465}{183433},\frac{618085}{733732}\right)\)
&
\(\frac{1027577}{29349280000}\)
\\[4pt]

\bottomrule
\end{longtable}

All minimum values in the last column are nonnegative. Hence
\(D(x,z)\ge0\) on each region, which proves the pointwise lower bound
\eqref{eq:quadratic-kernel-lower-bound}.

There are \(4\cdot3=12\) possible intersections between the uniform
regions \(U_{AA},U_{AB},U_{CA},U_{CB}\) and the atom regions
\(V_L,V_X,V_R\). Three of them are empty. First, \(U_{AB}\cap V_R\) is
empty: if \((x,z)\in U_{AB}\cap V_R\), then \(x\ge\frac12\) and
\(z\ge2x\), so \(z\ge1\). Since \(z\le1\), this forces
\(x=\frac12\) and \(z=1\). But \(U_{AB}\) also requires
\(z\le\frac{1+x}{2}=\frac34\), a contradiction. Second,
\(U_{CA}\cap V_L\) is empty: if \((x,z)\in U_{CA}\cap V_L\), then
\(z\le\frac12\) and \(z\ge\frac{1+x}{2}\). Hence
\(1+x\le2z\le1\), so \(x=0\) and \(z=\frac12\). But \(U_{CA}\) also
requires \(z\le2x\), which would imply \(\frac12\le0\), again a
contradiction. Third, \(U_{CB}\cap V_R\) is empty because
\(x>\frac12\) and \(z\ge2x\) would imply \(z>1\).
In particular, \((\frac12,1)\) belongs to \(V_X\), not \(V_R\).

We first explain how the polynomial \(10000D(x,z)\) in each row of the
certificate table is obtained. This is the only place where the parameter
\(\lambda=\frac{93}{100}\) and the constants in \(P(x,z)\) enter the
calculation.

Recall that
\[
    D(x,z)=K_{93/100}^s(x,z)-P(x,z),
\]
where
\[
    K_{93/100}^s(x,z)
    =
    \frac{93}{100}K^s(x,z)
    +
    \frac{7}{100}H^s(x,z),
\]
and
\[
    P(x,z)
    =
    \frac{7641}{10000}
    -
    \frac{2741}{4000}(x+z)
    +
    \frac{6869}{5000}xz.
\]
Multiplying by \(10000\), we get
\begin{align}
    10000D(x,z)
    &=
    9300K^s(x,z)
    +
    700H^s(x,z)
    -
    10000P(x,z) \notag\\
    &=
    9300K^s(x,z)
    +
    700H^s(x,z)
    -
    7641
    +
    \frac{13705}{2}(x+z)
    -
    13738xz.
    \label{eq:Q-common-form}
\end{align}
Thus, on each intersection \(U\cap V\), the polynomial in the table is
obtained by substituting the corresponding formula for \(K^s\) on \(U\)
and the corresponding formula for \(H^s\) on \(V\) into
\eqref{eq:Q-common-form}.

We now list these formulas explicitly. On the triangle
\(0\le x\le z\le1\), the symmetrized uniform-threshold kernel \(K^s\) has
four possible expressions, corresponding to the four uniform regions:
\[
\begin{array}{c|c}
\text{uniform region} & K^s(x,z)\\
\hline
U_{AA}
&
\displaystyle
2xz-\frac{x+z}{2}-\frac{x^2+z^2}{2}+\frac34
\\[6pt]
U_{AB}
&
\displaystyle
xz+\frac{x^2}{2}-\frac{z^2}{4}
-\frac{x+z}{2}+\frac34
\\[6pt]
U_{CA}
&
\displaystyle
xz-\frac{x^2}{4}+\frac{z^2}{2}
-\frac{3z}{2}+1
\\[6pt]
U_{CB}
&
\displaystyle
\frac34x^2+\frac34z^2-\frac32z+1
\end{array}
\]
Similarly, the symmetrized atom kernel \(H^s\) has three possible
expressions:
\[
\begin{array}{c|c}
\text{atom region} & H^s(x,z)\\
\hline
V_L
&
\displaystyle
1-\frac{x+z}{2}
\\[6pt]
V_X
&
\displaystyle
\frac{x+1-z}{2}
\\[6pt]
V_R
&
\displaystyle
\frac{x+z}{2}
\end{array}
\]

We now derive the polynomial \(Q_{U,V}(x,z)=10000D(x,z)\) for every
nonempty intersection \(U\cap V\).

\paragraph*{Case 1: \(U_{AA}\cap V_L\).}
Substituting
\[
    K^s(x,z)
    =
    2xz-\frac{x+z}{2}-\frac{x^2+z^2}{2}+\frac34
\]
and
\[
    H^s(x,z)
    =
    1-\frac{x+z}{2}
\]
into \eqref{eq:Q-common-form}, we get
\begin{align*}
    Q_{AA,L}(x,z)
    &=
    9300\left(
        2xz-\frac{x+z}{2}-\frac{x^2+z^2}{2}+\frac34
    \right)
    +
    700\left(1-\frac{x+z}{2}\right)  \\
    &\quad
    -
    7641
    +
    \frac{13705}{2}(x+z)
    -
    13738xz  \\
    &=
    -4650x^2
    +
    4862xz
    +
    \frac{3705}{2}x
    -
    4650z^2
    +
    \frac{3705}{2}z
    +
    34.
\end{align*}
Here, for example, the \(xz\)-coefficient is
\(9300\cdot2-13738=4862\), the \(x^2\)- and \(z^2\)-coefficients are both
\(9300\cdot(-\frac12)=-4650\), and the constant term is
\(9300\cdot\frac34+700-7641=34\).

\paragraph*{Case 2: \(U_{AA}\cap V_X\).}
Here \(K^s\) is the same as above, while
\(H^s(x,z)=\frac{x+1-z}{2}\). Hence
\begin{align*}
    Q_{AA,X}(x,z)
    &=
    9300\left(
        2xz-\frac{x+z}{2}-\frac{x^2+z^2}{2}+\frac34
    \right)
    +
    700\left(\frac{x+1-z}{2}\right)  \\
    &\quad
    -
    7641
    +
    \frac{13705}{2}(x+z)
    -
    13738xz  \\
    &=
    -4650x^2
    +
    4862xz
    +
    \frac{5105}{2}x
    -
    4650z^2
    +
    \frac{3705}{2}z
    -
    316.
\end{align*}
For instance, the coefficient of \(x\) is
\(-9300\cdot\frac12+700\cdot\frac12+\frac{13705}{2}
=\frac{5105}{2}\), while the coefficient of \(z\) is
\(-9300\cdot\frac12-700\cdot\frac12+\frac{13705}{2}
=\frac{3705}{2}\).

\paragraph*{Case 3: \(U_{AA}\cap V_R\).}
Here \(K^s\) is again the \(U_{AA}\) expression, and
\(H^s(x,z)=\frac{x+z}{2}\). Therefore
\begin{align*}
    Q_{AA,R}(x,z)
    &=
    9300\left(
        2xz-\frac{x+z}{2}-\frac{x^2+z^2}{2}+\frac34
    \right)
    +
    700\left(\frac{x+z}{2}\right)  \\
    &\quad
    -
    7641
    +
    \frac{13705}{2}(x+z)
    -
    13738xz  \\
    &=
    -4650x^2
    +
    4862xz
    +
    \frac{5105}{2}x
    -
    4650z^2
    +
    \frac{5105}{2}z
    -
    666.
\end{align*}

\paragraph*{Case 4: \(U_{AB}\cap V_L\).}
On \(U_{AB}\), we have
\[
    K^s(x,z)
    =
    xz+\frac{x^2}{2}-\frac{z^2}{4}
    -\frac{x+z}{2}+\frac34.
\]
Together with \(H^s(x,z)=1-\frac{x+z}{2}\), this gives
\begin{align*}
    Q_{AB,L}(x,z)
    &=
    9300\left(
        xz+\frac{x^2}{2}-\frac{z^2}{4}
        -\frac{x+z}{2}+\frac34
    \right)
    +
    700\left(1-\frac{x+z}{2}\right)  \\
    &\quad
    -
    7641
    +
    \frac{13705}{2}(x+z)
    -
    13738xz  \\
    &=
    4650x^2
    -
    4438xz
    +
    \frac{3705}{2}x
    -
    2325z^2
    +
    \frac{3705}{2}z
    +
    34.
\end{align*}
Here the \(xz\)-coefficient is \(9300-13738=-4438\), the \(x^2\)-coefficient
is \(9300\cdot\frac12=4650\), and the \(z^2\)-coefficient is
\(9300\cdot(-\frac14)=-2325\).

\paragraph*{Case 5: \(U_{AB}\cap V_X\).}
Using the same \(U_{AB}\) expression for \(K^s\), and using
\(H^s(x,z)=\frac{x+1-z}{2}\), we get
\begin{align*}
    Q_{AB,X}(x,z)
    &=
    9300\left(
        xz+\frac{x^2}{2}-\frac{z^2}{4}
        -\frac{x+z}{2}+\frac34
    \right)
    +
    700\left(\frac{x+1-z}{2}\right)  \\
    &\quad
    -
    7641
    +
    \frac{13705}{2}(x+z)
    -
    13738xz  \\
    &=
    4650x^2
    -
    4438xz
    +
    \frac{5105}{2}x
    -
    2325z^2
    +
    \frac{3705}{2}z
    -
    316.
\end{align*}

\paragraph*{Case 6: \(U_{CA}\cap V_X\).}
On \(U_{CA}\), we have
\[
    K^s(x,z)
    =
    xz-\frac{x^2}{4}+\frac{z^2}{2}
    -\frac{3z}{2}+1.
\]
Together with \(H^s(x,z)=\frac{x+1-z}{2}\), this gives
\begin{align*}
    Q_{CA,X}(x,z)
    &=
    9300\left(
        xz-\frac{x^2}{4}+\frac{z^2}{2}
        -\frac{3z}{2}+1
    \right)
    +
    700\left(\frac{x+1-z}{2}\right)  \\
    &\quad
    -
    7641
    +
    \frac{13705}{2}(x+z)
    -
    13738xz  \\
    &=
    -2325x^2
    -
    4438xz
    +
    \frac{14405}{2}x
    +
    4650z^2
    -
    \frac{14895}{2}z
    +
    2009.
\end{align*}
For the linear terms, note that the \(x\)-coefficient is
\(700\cdot\frac12+\frac{13705}{2}=\frac{14405}{2}\), while the
\(z\)-coefficient is
\(9300\cdot(-\frac32)-700\cdot\frac12+\frac{13705}{2}
=-\frac{14895}{2}\).

\paragraph*{Case 7: \(U_{CA}\cap V_R\).}
Here \(K^s\) is again the \(U_{CA}\) expression, and
\(H^s(x,z)=\frac{x+z}{2}\). Thus
\begin{align*}
    Q_{CA,R}(x,z)
    &=
    9300\left(
        xz-\frac{x^2}{4}+\frac{z^2}{2}
        -\frac{3z}{2}+1
    \right)
    +
    700\left(\frac{x+z}{2}\right)  \\
    &\quad
    -
    7641
    +
    \frac{13705}{2}(x+z)
    -
    13738xz  \\
    &=
    -2325x^2
    -
    4438xz
    +
    \frac{14405}{2}x
    +
    4650z^2
    -
    \frac{13495}{2}z
    +
    1659.
\end{align*}

\paragraph*{Case 8: \(U_{CB}\cap V_L\).}
On \(U_{CB}\), we have
\[
    K^s(x,z)
    =
    \frac34x^2+\frac34z^2-\frac32z+1.
\]
Together with \(H^s(x,z)=1-\frac{x+z}{2}\), this gives
\begin{align*}
    Q_{CB,L}(x,z)
    &=
    9300\left(
        \frac34x^2+\frac34z^2-\frac32z+1
    \right)
    +
    700\left(1-\frac{x+z}{2}\right)  \\
    &\quad
    -
    7641
    +
    \frac{13705}{2}(x+z)
    -
    13738xz  \\
    &=
    6975x^2
    -
    13738xz
    +
    \frac{13005}{2}x
    +
    6975z^2
    -
    \frac{14895}{2}z
    +
    2359.
\end{align*}
Here \(K^s\) has no \(xz\)-term on \(U_{CB}\), so the entire
\(xz\)-coefficient \(-13738\) comes from \(-10000P(x,z)\).

\paragraph*{Case 9: \(U_{CB}\cap V_X\).}
Using the \(U_{CB}\) expression for \(K^s\), and using
\(H^s(x,z)=\frac{x+1-z}{2}\), we get
\begin{align*}
    Q_{CB,X}(x,z)
    &=
    9300\left(
        \frac34x^2+\frac34z^2-\frac32z+1
    \right)
    +
    700\left(\frac{x+1-z}{2}\right)  \\
    &\quad
    -
    7641
    +
    \frac{13705}{2}(x+z)
    -
    13738xz  \\
    &=
    6975x^2
    -
    13738xz
    +
    \frac{14405}{2}x
    +
    6975z^2
    -
    \frac{14895}{2}z
    +
    2009.
\end{align*}

The three intersections \(U_{AB}\cap V_R\), \(U_{CA}\cap V_L\), and \(U_{CB}\cap V_R\) are empty,
as verified above, so no polynomial is
needed for them. The expressions derived above are exactly the
polynomials \(10000D(x,z)\) appearing in the certificate table.

We now verify the polynomial minimum in each row of the table.
In each case below, \(S\) denotes the nonempty intersection named in the
heading, \(R=\overline S\), and \(Q_R\) denotes the continuous polynomial
extension of \(10000D|_S\). We compute
\(m_S=\frac{1}{10000}\min_R Q_R\), so that \(D(x,z)\ge m_S\)
for all \((x,z)\in S\). We do not identify this extension with \(D\)
at points of \(R\setminus S\). Each actual boundary point is covered
by its own atom region in \eqref{eq:atom-regions}.

For two points
\(p=(p_x,p_z)\) and \(q=(q_x,q_z)\), write
\[
    e_{p,q}(t)=(1-t)p+tq,
    \qquad 0\le t\le1.
\]
If an edge of a region is the segment from \(p\) to \(q\), then the
restriction of \(Q_R\) to this edge is the univariate quadratic
\[
    q_{p,q}(t)
    =
    Q_R(e_{p,q}(t)).
\]
The minimum of \(q_{p,q}\) over \(0\le t\le1\) is obtained by checking
\(t=0\), \(t=1\), and, if it lies in \([0,1]\), the solution of
\(q_{p,q}'(t)=0\). Since each \(Q_R\) is a quadratic polynomial and each
two-dimensional region is a polygon, the minimum over the region is
attained either at an interior stationary point or on one of the boundary
segments. Thus the following checks are exhaustive.

\paragraph*{Case 1: \(U_{AA}\cap V_L\).}
The region is
\[
    R=\left\{(x,z):0\le x\le z\le\frac12,\ z\le2x\right\},
\]
with vertices
\[
    p_0=(0,0),\qquad
    p_1=\left(\frac12,\frac12\right),\qquad
    p_2=\left(\frac14,\frac12\right).
\]
On this region,
\[
    Q_R(x,z)
    =
    -4650x^2+4862xz+\frac{3705}{2}x
    -4650z^2+\frac{3705}{2}z+34.
\]
The stationary equations are
\[
    -9300x+4862z+\frac{3705}{2}=0,
    \qquad
    4862x-9300z+\frac{3705}{2}=0.
\]
Solving this linear system gives
\[
    (x,z)=\left(\frac{3705}{8876},\frac{3705}{8876}\right).
\]
This point is feasible, but its value is
\[
    Q_R\left(\frac{3705}{8876},\frac{3705}{8876}\right)
    =
    \frac{14330593}{17752}
    >
    34.
\]
It remains to check the three edges:
\[
\renewcommand{\arraystretch}{1.25}
\begin{array}{c|c|c|c|c}
\text{edge} & (x(t),z(t)) & Q_R(e(t)) & \frac{d}{dt}Q_R(e(t))
& \min_{0\le t\le1} Q_R(e(t))\\
\hline
p_0p_1
&
\left(\frac t2,\frac t2\right)
&
-\frac{2219t^2-3705t-68}{2}
&
-\frac{4438t-3705}{2}
&
34\text{ at }t=0
\\[2pt]
p_1p_2
&
\left(\frac12-\frac t4,\frac12\right)
&
-\frac{2325t^2-733t-6216}{8}
&
-\frac{4650t-733}{8}
&
578\text{ at }t=1
\\[2pt]
p_2p_0
&
\left(\frac{1-t}{4},\frac{1-t}{2}\right)
&
-\frac{6763t^2-2411t-4624}{8}
&
-\frac{13526t-2411}{8}
&
34\text{ at }t=1
\end{array}
\]
Therefore \(\min_R Q_R=34\), and hence
\[
    m_S=\frac{34}{10000}=\frac{17}{5000}.
\]

\paragraph*{Case 2: \(U_{AA}\cap V_X\).}
The region has vertices
\[
    p_0=\left(\frac14,\frac12\right),\quad
    p_1=\left(\frac12,\frac12\right),\quad
    p_2=\left(\frac12,\frac34\right),\quad
    p_3=\left(\frac13,\frac23\right).
\]
On this region,
\[
    Q_R(x,z)
    =
    -4650x^2+4862xz+\frac{5105}{2}x
    -4650z^2+\frac{3705}{2}z-316.
\]
The stationary equations are
\[
    -9300x+4862z+\frac{5105}{2}=0,
    \qquad
    4862x-9300z+\frac{3705}{2}=0.
\]
Solving gives
\[
    (x,z)=
    \left(
        \frac{32745105}{62850956},
        \frac{29638505}{62850956}
    \right),
\]
which is infeasible, since it violates \(x\le z\) and also has
\(z<\frac12\). We therefore only need to check the boundary:
\[
\renewcommand{\arraystretch}{1.25}
\begin{array}{c|c|c|c|c}
\text{edge} & (x(t),z(t)) & Q_R(e(t)) & \frac{d}{dt}Q_R(e(t))
& \min_{0\le t\le1} Q_R(e(t))\\
\hline
p_0p_1
&
\left(\frac14+\frac t4,\frac12\right)
&
-\frac{2325t^2-5317t-3224}{8}
&
-\frac{4650t-5317}{8}
&
403\text{ at }t=0
\\[2pt]
p_1p_2
&
\left(\frac12,\frac12+\frac t4\right)
&
-\frac{2325t^2+733t-6216}{8}
&
-\frac{4650t+733}{8}
&
\frac{1579}{4}\text{ at }t=1
\\[2pt]
p_2p_3
&
\left(\frac12-\frac t6,\frac34-\frac{t}{12}\right)
&
-\frac{6763t^2+2439t-28422}{72}
&
-\frac{13526t+2439}{72}
&
\frac{4805}{18}\text{ at }t=1
\\[2pt]
p_3p_0
&
\left(\frac13-\frac{t}{12},\frac23-\frac t6\right)
&
-\frac{6763t^2-16559t-19220}{72}
&
-\frac{13526t-16559}{72}
&
\frac{4805}{18}\text{ at }t=0
\end{array}
\]
Therefore
\[
    \min_R Q_R=\frac{4805}{18},
\]
attained at \(p_3=\left(\frac13,\frac23\right)\). Hence
\[
    m_S
    =
    \frac{1}{10000}\cdot\frac{4805}{18}
    =
    \frac{961}{36000}.
\]

\paragraph*{Case 3: \(U_{AA}\cap V_R\).}
The region has vertices
\[
    p_0=\left(\frac12,\frac12\right),\qquad
    p_1=(1,1),\qquad
    p_2=\left(\frac12,\frac34\right).
\]
On this region,
\[
    Q_R(x,z)
    =
    -4650x^2+4862xz+\frac{5105}{2}x
    -4650z^2+\frac{5105}{2}z-666.
\]
The stationary equations are
\[
    -9300x+4862z+\frac{5105}{2}=0,
    \qquad
    4862x-9300z+\frac{5105}{2}=0.
\]
Solving gives
\[
    (x,z)=\left(\frac{5105}{8876},\frac{5105}{8876}\right),
\]
which is feasible. Its value is
\[
    Q_R\left(\frac{5105}{8876},\frac{5105}{8876}\right)
    =
    \frac{14238193}{17752}
    >
    1.
\]
The edge checks are
\[
\renewcommand{\arraystretch}{1.25}
\begin{array}{c|c|c|c|c}
\text{edge} & (x(t),z(t)) & Q_R(e(t)) & \frac{d}{dt}Q_R(e(t))
& \min_{0\le t\le1} Q_R(e(t))\\
\hline
p_0p_1
&
\left(\frac12+\frac t2,\frac12+\frac t2\right)
&
-\frac{2219t^2-667t-1554}{2}
&
-\frac{4438t-667}{2}
&
1\text{ at }t=1
\\[2pt]
p_1p_2
&
\left(1-\frac t2,1-\frac t4\right)
&
-\frac{6763t^2-11313t-8}{8}
&
-\frac{13526t-11313}{8}
&
1\text{ at }t=0
\\[2pt]
p_2p_0
&
\left(\frac12,\frac34-\frac t4\right)
&
-\frac{2325t^2-3983t-4558}{8}
&
-\frac{4650t-3983}{8}
&
\frac{2279}{4}\text{ at }t=0
\end{array}
\]
Therefore \(\min_R Q_R=1\), attained at \((1,1)\), and hence
\[
    m_S=\frac{1}{10000}.
\]

\paragraph*{Case 4: \(U_{AB}\cap V_L\).}
The region has vertices
\[
    p_0=(0,0),\qquad
    p_1=\left(\frac14,\frac12\right),\qquad
    p_2=\left(0,\frac12\right).
\]
On this region,
\[
    Q_R(x,z)
    =
    4650x^2-4438xz+\frac{3705}{2}x
    -2325z^2+\frac{3705}{2}z+34.
\]
The stationary equations are
\[
    9300x-4438z+\frac{3705}{2}=0,
    \qquad
    -4438x-4650z+\frac{3705}{2}=0.
\]
Solving gives
\[
    (x,z)=
    \left(
        -\frac{10335}{1656338},
        \frac{1339455}{3312676}
    \right),
\]
which is infeasible because the \(x\)-coordinate is negative. The edge
checks are
\[
\renewcommand{\arraystretch}{1.25}
\begin{array}{c|c|c|c|c}
\text{edge} & (x(t),z(t)) & Q_R(e(t)) & \frac{d}{dt}Q_R(e(t))
& \min_{0\le t\le1} Q_R(e(t))\\
\hline
p_0p_1
&
\left(\frac t4,\frac t2\right)
&
-\frac{6763t^2-11115t-272}{8}
&
-\frac{13526t-11115}{8}
&
34\text{ at }t=0
\\[2pt]
p_1p_2
&
\left(\frac{1-t}{4},\frac12\right)
&
\frac{2325t^2-3917t+4624}{8}
&
\frac{4650t-3917}{8}
&
\frac{27660311}{74400}\text{ at }t=\frac{3917}{4650}
\\[2pt]
p_2p_0
&
\left(0,\frac{1-t}{2}\right)
&
-\frac{2325t^2-945t-1516}{4}
&
-\frac{15(310t-63)}{4}
&
34\text{ at }t=1
\end{array}
\]
Thus \(\min_R Q_R=34\), attained at \((0,0)\), and therefore
\[
    m_S=\frac{17}{5000}.
\]

\paragraph*{Case 5: \(U_{AB}\cap V_X\).}
The region has vertices
\[
    p_0=\left(0,\frac12\right),\qquad
    p_1=\left(\frac14,\frac12\right),\qquad
    p_2=\left(\frac13,\frac23\right).
\]
On this region,
\[
    Q_R(x,z)
    =
    4650x^2-4438xz+\frac{5105}{2}x
    -2325z^2+\frac{3705}{2}z-316.
\]
The stationary equations are
\[
    9300x-4438z+\frac{5105}{2}=0,
    \qquad
    -4438x-4650z+\frac{3705}{2}=0.
\]
Solving gives
\[
    (x,z)=
    \left(
        -\frac{1823865}{31470422},
        \frac{28556245}{62940844}
    \right),
\]
which is infeasible because \(x<0\) and \(z<\frac12\). The edge checks are
\[
\renewcommand{\arraystretch}{1.25}
\begin{array}{c|c|c|c|c}
\text{edge} & (x(t),z(t)) & Q_R(e(t)) & \frac{d}{dt}Q_R(e(t))
& \min_{0\le t\le1} Q_R(e(t))\\
\hline
p_0p_1
&
\left(\frac t4,\frac12\right)
&
\frac{2325t^2+667t+232}{8}
&
\frac{4650t+667}{8}
&
29\text{ at }t=0
\\[2pt]
p_1p_2
&
\left(\frac14+\frac{t}{12},\frac12+\frac t6\right)
&
-\frac{6763t^2+3033t-29016}{72}
&
-\frac{13526t+3033}{72}
&
\frac{4805}{18}\text{ at }t=1
\\[2pt]
p_2p_0
&
\left(\frac{1-t}{3},\frac23-\frac t6\right)
&
\frac{7399t^2-15965t+9610}{36}
&
\frac{14798t-15965}{36}
&
29\text{ at }t=1
\end{array}
\]
Therefore \(\min_R Q_R=29\), attained at
\(\left(0,\frac12\right)\), and
\[
    m_S=\frac{29}{10000}.
\]

\paragraph*{Case 6: \(U_{CA}\cap V_X\).}
The region has vertices
\[
    p_0=\left(\frac13,\frac23\right),\qquad
    p_1=\left(\frac12,\frac34\right),\qquad
    p_2=\left(\frac12,1\right).
\]
On this region,
\[
    Q_R(x,z)
    =
    -2325x^2-4438xz+\frac{14405}{2}x
    +4650z^2-\frac{14895}{2}z+2009.
\]
The stationary equations are
\[
    -4650x-4438z+\frac{14405}{2}=0,
    \qquad
    -4438x+9300z-\frac{14895}{2}=0.
\]
Solving gives
\[
    (x,z)=
    \left(
        \frac{1785855}{3312676},
        \frac{1752515}{1656338}
    \right),
\]
which is infeasible because \(x>\frac12\) and \(z>1\). The boundary
checks are
\[
\renewcommand{\arraystretch}{1.25}
\begin{array}{c|c|c|c|c}
\text{edge} & (x(t),z(t)) & Q_R(e(t)) & \frac{d}{dt}Q_R(e(t))
& \min_{0\le t\le1} Q_R(e(t))\\
\hline
p_0p_1
&
\left(\frac13+\frac t6,\frac23+\frac{t}{12}\right)
&
-\frac{6763t^2-15965t-19220}{72}
&
-\frac{13526t-15965}{72}
&
\frac{4805}{18}\text{ at }t=0
\\[2pt]
p_1p_2
&
\left(\frac12,\frac34+\frac t4\right)
&
\frac{2325t^2-5383t+3158}{8}
&
\frac{4650t-5383}{8}
&
\frac{25}{2}\text{ at }t=1
\\[2pt]
p_2p_0
&
\left(\frac12-\frac t6,1-\frac t3\right)
&
\frac{7399t^2+1761t+450}{36}
&
\frac{14798t+1761}{36}
&
\frac{25}{2}\text{ at }t=0
\end{array}
\]
Thus \(\min_R Q_R=\frac{25}{2}\), attained at
\(\left(\frac12,1\right)\), and
\[
    m_S
    =
    \frac{1}{10000}\cdot\frac{25}{2}
    =
    \frac{1}{800}.
\]

\paragraph*{Case 7: \(U_{CA}\cap V_R\).}
The region has vertices
\[
    p_0=\left(\frac12,\frac34\right),\qquad
    p_1=(1,1),\qquad
    p_2=\left(\frac12,1\right).
\]
On this region,
\[
    Q_R(x,z)
    =
    -2325x^2-4438xz+\frac{14405}{2}x
    +4650z^2-\frac{13495}{2}z+1659.
\]
The stationary equations are
\[
    -4650x-4438z+\frac{14405}{2}=0,
    \qquad
    -4438x+9300z-\frac{13495}{2}=0.
\]
Solving gives
\[
    (x,z)=
    \left(
        \frac{37037845}{62940844},
        \frac{31670285}{31470422}
    \right),
\]
which is infeasible because \(z>1\). The edge checks are
\[
\renewcommand{\arraystretch}{1.25}
\begin{array}{c|c|c|c|c}
\text{edge} & (x(t),z(t)) & Q_R(e(t)) & \frac{d}{dt}Q_R(e(t))
& \min_{0\le t\le1} Q_R(e(t))\\
\hline
p_0p_1
&
\left(\frac12+\frac t2,\frac34+\frac t4\right)
&
-\frac{6763t^2-2213t-4558}{8}
&
-\frac{13526t-2213}{8}
&
1\text{ at }t=1
\\[2pt]
p_1p_2
&
\left(1-\frac t2,1\right)
&
-\frac{2325t^2-3771t-4}{4}
&
-\frac{3(1550t-1257)}{4}
&
1\text{ at }t=0
\\[2pt]
p_2p_0
&
\left(\frac12,1-\frac t4\right)
&
\frac{2325t^2-667t+2900}{8}
&
\frac{4650t-667}{8}
&
\frac{26525111}{74400}\text{ at }t=\frac{667}{4650}
\end{array}
\]
Therefore \(\min_R Q_R=1\), attained at \((1,1)\), and hence
\[
    m_S=\frac{1}{10000}.
\]

\paragraph*{Case 8: \(U_{CB}\cap V_L\).}
This intersection is lower-dimensional. Indeed, \(z\le\frac12\) and
\(z\ge\frac{1+x}{2}\) imply \(1+x\le2z\le1\), so \(x=0\) and
\(z=\frac12\). Hence
\[
    U_{CB}\cap V_L
    =
    \left\{\left(0,\frac12\right)\right\}.
\]
On this region,
\[
    Q_R(x,z)
    =
    6975x^2-13738xz+\frac{13005}{2}x
    +6975z^2-\frac{14895}{2}z+2359.
\]
Since the region consists of a single point, the minimum is obtained by
direct evaluation:
\[
\begin{aligned}
    Q_R\left(0,\frac12\right)
    &=
    6975\cdot0^2
    -13738\cdot0\cdot\frac12
    +\frac{13005}{2}\cdot0
    +6975\cdot\frac14
    -\frac{14895}{2}\cdot\frac12
    +2359  \\
    &=
    \frac{6975}{4}-\frac{14895}{4}+2359
    =
    379.
\end{aligned}
\]
Therefore
\[
    m_S=\frac{379}{10000}.
\]

\paragraph*{Case 9: \(U_{CB}\cap V_X\).}
The region has vertices
\[
    p_0=\left(0,\frac12\right),\qquad
    p_1=\left(\frac13,\frac23\right),\qquad
    p_2=\left(\frac12,1\right),\qquad
    p_3=(0,1).
\]
On this region,
\[
    Q_R(x,z)
    =
    6975x^2-13738xz+\frac{14405}{2}x
    +6975z^2-\frac{14895}{2}z+2009.
\]
The stationary equations are
\[
    13950x-13738z+\frac{14405}{2}=0,
    \qquad
    -13738x+13950z-\frac{14895}{2}=0.
\]
Equivalently,
\[
    \begin{pmatrix}
        13950 & -13738\\
        -13738 & 13950
    \end{pmatrix}
    \binom{x}{z}
    =
    \binom{-14405/2}{14895/2}.
\]
Solving this system gives
\[
    (x^*,z^*)
    =
    \left(
        \frac{57465}{183433},
        \frac{618085}{733732}
    \right).
\]
This point is feasible. Indeed,
\[
    \frac12-x^*=\frac{68503}{366866}\ge0,\qquad
    z^*-\frac12=\frac{251219}{733732}\ge0,\qquad
    1-z^*=\frac{115647}{733732}\ge0,
\]
and the two defining inequalities of \(U_{CB}\) hold because
\[
    z^*-2x^*
    =
    \frac{158365}{733732}\ge0,
    \qquad
    z^*-\frac{1+x^*}{2}
    =
    \frac{136289}{733732}\ge0.
\]
Moreover, the Hessian matrix of \(Q_R\) is
\[
    \begin{pmatrix}
        13950 & -13738\\
        -13738 & 13950
    \end{pmatrix},
\]
whose eigenvalues are \(13950-13738=212\) and
\(13950+13738=27688\). Hence \(Q_R\) is strictly convex on this region.
Therefore this feasible stationary point is the unique global minimizer
over \(R\). Its value is
\[
    Q_R(x^*,z^*)
    =
    \frac{1027577}{2934928}.
\]
For completeness, we also record the boundary checks:
\[
\renewcommand{\arraystretch}{1.25}
\begin{array}{c|c|c|c|c}
\text{edge} & (x(t),z(t)) & Q_R(e(t)) & \frac{d}{dt}Q_R(e(t))
& \min_{0\le t\le1} Q_R(e(t))\\
\hline
p_0p_1
&
\left(\frac t3,\frac12+\frac t6\right)
&
\frac{7399t^2+1167t+1044}{36}
&
\frac{14798t+1167}{36}
&
29\text{ at }t=0
\\[2pt]
p_1p_2
&
\left(\frac13+\frac t6,\frac23+\frac t3\right)
&
\frac{7399t^2-16559t+9610}{36}
&
\frac{14798t-16559}{36}
&
\frac{25}{2}\text{ at }t=1
\\[2pt]
p_2p_3
&
\left(\frac{1-t}{2},1\right)
&
\frac{6975t^2-879t+50}{4}
&
\frac{3(4650t-293)}{4}
&
\frac{69151}{12400}\text{ at }t=\frac{293}{4650}
\\[2pt]
p_3p_0
&
\left(0,1-\frac t2\right)
&
\frac{6975t^2-13005t+6146}{4}
&
\frac{45(310t-289)}{4}
&
\frac{10415}{496}\text{ at }t=\frac{289}{310}
\end{array}
\]
All boundary minima are larger than
\(\frac{1027577}{2934928}\). Hence
\[
    \min_R Q_R
    =
    \frac{1027577}{2934928},
\]
and therefore
\[
    m_S
    =
    \frac{1}{10000}\cdot
    \frac{1027577}{2934928}
    =
    \frac{1027577}{29349280000}.
\]

Combining all the cases above, every nonempty intersection has
a nonnegative lower bound for \(D\). Therefore \(D(x,z)\ge0\) on the triangle
\(0\le x\le z\le1\). By symmetry of \(D\), the same inequality holds on
the whole square \([0,1]^2\).

\end{document}